\newtheorem{theorem}{Theorem}
\newtheorem{lemma}[theorem]{Lemma}
\newcommand{\ketbra}[2]{|#1\rangle\langle #2|}
\newcommand{\lrrb}[1]{\left( #1 \right)}
\newcommand{\lrcb}[1]{\left \{ #1 \right \}}
\newcommand{\abs}[1]{\left | #1 \right |}
\newcommand{\norm}[1]{\left \| #1 \right \|}
\newcommand{\sech}[1]{\operatorname{sech}\left ( #1 \right)}
\newcommand{\indicator}{\operatorname{1}_{S}}
\newcommand{\angstrom}{\mbox{\normalfont\AA}}
\newcommand{\R}{\mathbb{R}}
\newcommand{\C}{\mathbb{C}}
\newcommand{\Z}{\mathbb{Z}}
\newcommand{\defeq}{\coloneqq}
\renewcommand{\exp}[1]{\operatorname{exp}\lrrb{#1}}
\begin{document}

\title{State preparation boosters for early fault-tolerant quantum computation}

\author[1]{Guoming Wang}
\email{guoming.wang@zapatacomputing.com}
\author[2]{Sukin Sim}
\email{sukin.sim@gmail.com}
\author[2]{Peter D. Johnson}
\email{peter@zapatacomputing.com}
\affiliation[1]{Zapata Computing Canada Inc., 25 Adelaide St E, Suite 1500, Toronto, ON M5C 3A1, Canada}
\affiliation[2]{Zapata Computing Inc., 100 Federal Street, 20th Floor, Boston, MA 02110, USA}

\maketitle

\begin{abstract}
Quantum computing is believed to be particularly useful for the simulation of chemistry and materials, among the various applications. In recent years, there have been significant advancements in the development of near-term quantum algorithms for quantum simulation, including VQE and many of its variants. However, for such algorithms to be useful, they need to overcome several critical barriers including the inability to prepare high-quality approximations of the ground state. Current challenges to state preparation, including barren plateaus and the high-dimensionality of the optimization landscape, make state preparation through ansatz optimization unreliable. In this work, we introduce the method of ground state boosting, which uses a limited-depth quantum circuit to reliably increase the overlap with the ground state. This circuit, which we call a booster, can be used to augment an ansatz from VQE or be used as a stand-alone state preparation method. The booster converts circuit depth into ground state overlap in a controllable manner. We numerically demonstrate the capabilities of boosters by simulating the performance of a particular type of booster, namely the Gaussian booster, for preparing the ground state of $N_2$ molecular system. Beyond ground state preparation as a direct objective, many quantum algorithms, such as quantum phase estimation, rely on high-quality state preparation as a subroutine. Therefore, we foresee ground state boosting and similar methods as becoming essential algorithmic components as the field transitions into using early fault-tolerant quantum computers.
\end{abstract}

\section{Introduction}

Quantum computing holds the promise to solve previously intractable problems in quantum chemistry and materials \cite{aspuru2005simulated, cao2019quantum}.
Essential to this promise is the ability of a quantum computer to efficiently encode and process wave functions that represent the quantum system of interest.
This system 
is encoded on the quantum computer via a qubit representation of the Hamiltonian.
An important step of many quantum algorithms for quantum chemistry and materials is to prepare on the quantum computer an approximation to the ground state of this Hamiltonian \cite{aspuru2005simulated, kassal2008polynomial}.
The performance of many quantum algorithms relies on the performance of the ground state preparation step \cite{kitaev2002classical,ge2019faster,lin2021heisenberg}.
In the realm of ``far-term'' quantum algorithms, the runtime of quantum phase estimation and related techniques scales as $O(1/p^2)$, where $p$ is the quality, or fidelity, of the input ground state approximation \cite{ge2019faster,lin2021heisenberg}.
Improved methods \cite{knill2007optimal} reduce this runtime to $O(1/p)$ and recent work \cite{lin2020near} improves this runtime scaling to $O(1/\sqrt{p})$ at the cost of a $O(1/\sqrt{p})$ factor increase in circuit depth.
More ``near-term'' quantum algorithms, such as the variational quantum eigensolver (VQE), aim to estimate the ground state energy by directly preparing an approximation to the ground state; the quality of the energy estimate is limited by the quality of the ground state preparation \cite{Peruzzo2014,mcclean2016theory}.
The benefit of near-term quantum algorithms like VQE is that they have the potential to prepare good approximations to the ground state, while using relatively little circuit depth.
The relatively low circuit depth limits the accrual of error in the state preparation task.
While such techniques may serve as a good ``head start'' in approximate ground state preparation, recent numerical studies suggest that the approximations achievable with these techniques are insufficient for outperforming state-of-the-art classical methods \cite{mcclean2018barren,sim2021adaptive,bonet2021performance}.

This challenge motivates the use of further quantum circuitry, beyond heuristic parameterized quantum circuits to improve, or ``boost'', the ground state preparation subroutine in near-term quantum algorithms.
Such techniques are likely needed to achieve quantum advantage for chemistry with early fault-tolerant quantum computers.
While there do exist methods for improving the approximation of ground states \cite{poulin2009preparing, lin2020near} and excited states \cite{jensen2020quantum}, these methods are designed to be performant with an idealized quantum computer.
With early fault-tolerant quantum computers, we should design algorithms which accommodate the build-up of error during the quantum computation.

This motivates the need for quantum algorithms which reliably increase the overlap with the ground state of a target Hamiltonian using limited circuit depth.
One such method is developed in Ref.~\cite{anshu2021improved}. This approach, suited for few-body interacting Hamiltonians, gives a low-depth quantum circuit which ensures a fixed energy reduction.
However, it is not clear how to futher reduce the energy or increase the ground state overlap beyond this initial amount.
If a particular energy reduction or overlap is required, the reduction ensured by this method might not be sufficient.
Furthermore, the performance of the method is inversely related to the locality of the Hamiltonian and its degree of interaction, which limits its application to quantum chemistry Hamiltonians.
Nevertheless, the method may provide a good input to additional state preparation methods.
Another such method is the quantum imaginary time evolution (QITE) algorithm, introduced in Ref.~\cite{motta2020determining}. QITE aims to prepare an approximation to a low-temperature thermal state of a Hamiltonian.
By choosing a sufficiently low temperature, the output state is a good approximation to the ground state.
The advantage of the QITE algorithm compared to previous imaginary time evolution algorithms \cite{mcardle2019variational} is that the performance of the method does not rely on the quality of a proposed ansatz circuit.
Rather, this iterative method constructs the best approximation to each imaginary time evolution by making local tomographic estimates of the current state.
One disadvantage of the QITE algorithm is that its favorable runtime requires the system of interest to have local interactions; many systems of interest, including the electronic structure of molecules are most-conveniently represented with non-local interactions.
Another disadvantage of the QITE algorithm is that it is unclear how the method will perform when subject to error in the quantum circuit.
A worry is that error will accrue unfavorably through the iterative process.
The reliability of the method in the presence of error for large-scale calculations is still an open question.

More recently, Amaro et al. \cite{amaro2021filtering} proposed a near-term ground state filtering method. Drawing inspiration from the ground state filtering of Ref.~\cite{ge2019faster}, instead of applying the filter operator to the initial state, they emulate the action of the filter operator by using a parameterized quantum circuit (PQC) and optimizing the parameters to minimize the Euclidean distance between the PQC-prepared state and the filter-applied state. They carry this out in an iterative fashion until termination criteria are reached. They investigate several filters (function types) and demonstrate the performance with QAOA. 
While this work does use similar concepts of filtering that we make use of, their method relies on a heuristic parameterized circuit optimization.
Accordingly, this filter does not admit performance guarantees and is likely to suffer from the same reliability issues as VQE.

In this context our work addresses the following question: \emph{using limited circuit depth, how can we increase the overlap with the ground state in a way that works efficiently for all Hamiltonians and gives reliable performance?}
We develop the method of \emph{state preparation boosting} for reliably improving the overlap with a target state while using quantum circuits of limited circuit depth.
In contrast to the optimization of parameterized quantum circuits, state preparation boosters reliably convert precious quantum circuit depth into overlap with the target state.
Moreover, ground state boosters can be appended to an already-optimized VQE circuit to reliably increase its overlap with the target state.
Alternatively, one could append a state preparation booster to a parameterized quantum circuit throughout the VQE optimization.

In Section \ref{sec:booster_method} we introduce the concept of state preparation boosters and introduce several methods for designing them.
In Section \ref{sec:numerics} we analyze the performance of boosters for a small molecular system and show that the boosters reliably convert circuit depth into overlap with the ground state.
We conclude with a discussion on future work for state preparation boosters and an outlook on their use in helping to achieve quantum advantage.
Several important technical details are included in the appendix.

While preparing this paper, we came across a recent work \cite{he2021quantum} presenting the Gaussian filter method similar to a particular type of booster investigated in this work. They present a hybrid quantum-classical method that estimates the ground state energy but does not prepare the ground state. Their fully quantum method, which prepares the ground state, is well-suited for qubit-qumode quantum processors, but the authors do not address how to prepare the ground state on a qubit-based quantum computer, which is the focus of our work. Furthermore, our work extends beyond Gaussian filters or boosters; we present a general framework that can be used to design low-depth boosters for preparing ground states.

After an earlier version of this paper was posted on arXiv, Dong et al. \cite{dong2022ground} proposed two algorithms for approximately preparing the ground state of a given Hamiltonian on early fault-tolerant quantum computers. They aim to achieve a fidelity close to one in the output state, and hence the costs of their algorithms are relatively high. Precisely, one of their algorithms has query depth $\tilde{O}(1/\Delta)$ and query complexity $\tilde{O}(1/(\Delta \gamma^2))$, and the other has query depth $\tilde{O}(1/(\Delta \gamma))$ and query complexity $\tilde{O}(1/(\Delta \gamma))$, where $\Delta$ and $\gamma$ are the spectral gap and initial ground state overlap, respectively. Although these algorithms could be modified to have lower circuit depth at the cost of producing a worse approximation \footnote{Ref.~\cite{dong2022ground} uses a technique called quantum eigenvalue transformation of unitary matrices (QET-U) which enables the implementation of certain polynomials of $\cos(H/2)$, where $H$ is the Hamiltonian of interest. For ground state preparation, the polynomial approximates a threshold function so that the QET-U circuit approximately implements a projective measurement onto the low-energy and high-energy subspaces of $H$. The depths of the resulting circuits are proportional to the degree of this polynomial. One might use a different polynomial with lower degree to construct a shallower circuit which generates a worse approximation of the ground state.}, this idea is not explored in their paper.

In contrast, here we aim to improve heuristic ground state preparation methods like VQE by using limited amount of resources. Our circuit depth can be tuned based on our need and hardware constraints, and using deeper circuits leads to steady increase of the ground state overlap. Although our method can be also used to prepare an extremely-accurate approximation of the ground state, this requires quite deep circuits and is not the main focus of the paper. We leave it as future work to compare the resource costs of our scheme and the one of Ref.~\cite{dong2022ground}.

\section{Constructing state preparation boosters}
\label{sec:booster_method}
In this section, we describe the methodology of state preparation boosters for improving ground state preparation. Our basic idea is that, given any ansatz circuit for approximately preparing the ground state of a Hamiltonian $H$, we compose it with a shallow circuit that implements a function of $H$, so that the extended circuit yields a state that has larger overlap with the ground state of $H$. The appended operation, denoted by $f(H)$, suppresses the high-energy eigenstates of $H$ and hence effectively boosts the low-energy ones for any input state. Thus we refer to these operations as \emph{boosters}. 

Formally, let $H=\sum_{j=1}^D \lambda_j \ket{\lambda_j}\bra{\lambda_j}$ be a Hamiltonian with eigenvalues $\lambda_j$'s and eigenstates $\ket{\lambda_j}$'s, where $a \le \lambda_1 \le \lambda_2 \le \dots \le \lambda_D \le b$ for some $a, b \in \R$. Suppose $\ket{\psi}=\sum_{j=1}^D \mu_j \ket{\lambda_j}$ is the state produced by an ansatz circuit such that  $\mu_1 \neq 0$. We will find a function $f: \mathbb{R} \to \mathbb{C}$ such that $|f|$ decreases monotonically on the interval $[a, b]$. Then performing the operation $f(H)$ on the state $\ket{\psi}$ yields the unnormalized state 
\begin{align}
f(H)\ket{\psi} = \sum_{j=1}^D \mu_j f(\lambda_j)  \ket{\lambda_j},
\end{align}
whose normalized version is
\begin{align}
\dfrac{f(H)\ket{\psi}}{\norm{f(H)\ket{\psi}}} 
= \sum_{j=1}^D \mu_j' \ket{\lambda_j}
= \dfrac{1}{\sqrt{Z}}\sum_{j=1}^D \mu_j f(\lambda_j) \ket{\lambda_j},     
\end{align}
where $Z = \sum_{j=1}^D \abs{\mu_j f(\lambda_j)}^2$ and $\mu_j'=\mu_j f(\lambda_j) / \sqrt{Z}$. By the assumption on $f$, we have $\abs{\frac{\mu_i f(\lambda_i)}{\mu_j f(\lambda_j)}} \ge \abs{\frac{\mu_i}{\mu_j}}$, $\forall i \le j$. It follows that $\abs{\mu_j'} \ge \abs{\mu_j}$ for $j=1, 2, \dots, k$, for some $1 \le k \le D$. {In other words}, the amplitudes of the lower-energy eigenstates of $H$ in $\frac{f(H)\ket{\psi}}{\norm{f(H)\ket{\psi}}}$ are larger than their counterparts in $\ket{\psi}$. In particular, the overlap between $\frac{f(H)\ket{\psi}}{\norm{f(H)\ket{\psi}}}$ and the ground state of $H$ is larger than the one for $\ket{\psi}$. 
We note that although monotonicity can ensure some degree of performance,
a non-monotonic booster may be favorable if it can be implemented with shorter-depth quantum circuits.
For example, the ``truncated Gaussian boosters'' that we explore later are technically non-monotonic (due to the truncation), but their corresponding functions have steeply decreasing envelopes, making them nonetheless performant.

Although in principle every function $f$ that satisfies the monotonicity condition leads to a booster, the cost of implementing the operation $f(H)$ vary significantly for different $f$'s. Furthermore, in general, $f(H)$ is non-unitary and hence we cannot implement it with certainty. So we need to choose $f$ carefully to achieve a balance among three factors:
\begin{itemize}
    \item To what extent can $f(H)$ increase the overlap with the ground state (or the low-energy eigenstates) of $H$?
    \item What is the depth of the circuit for implementing $f(H)$?
    \item What is the probability of this circuit successfully implementing $f(H)$?
\end{itemize}
Before addressing this issue in Section \ref{subsec:design_booster_function}, we first describe a method to physically implement $f(H)$ once $f$ is determined. 

\subsection{Implementing the booster operation}
\label{subsec:implement_booster_operation}
Suppose the booster function $f$ is known. We implement the booster operation $f(H)$  by obtaining a Fourier approximation of $f$ and applying the linear combination of unitaries (LCU) method to it. Specifically, let $\hat{f}$ be the Fourier transform of $f$, i.e.
\begin{align}
\hat{f}(\xi) = \displaystyle\int_{-\infty}^\infty f(x) e^{-i 2\pi x \xi} dx, \label{eq:ft}\\
f(x) = \displaystyle\int_{-\infty}^\infty \hat{f}(\xi) e^{i 2\pi x \xi} d\xi. \label{eq:inv_ft} 
\end{align}
We convert the RHS of Eq.~\eqref{eq:inv_ft} into a finite sum by truncating the integral to the region $[-T, T]$ for some $T>0$ and then discretizing the integral on that region. {Specifically}, let
\begin{align}
    f_T(x) = \displaystyle\int_{-T}^T \hat{f}(\xi) e^{i 2\pi x \xi} d\xi,
    \label{eq:def_ft}
\end{align}
and for any $N \in \Z^+$, let
\begin{align}
    f_{T,N}(x) = \frac{T}{N} \sum_{j=-N}^{N-1} \hat{f}\lrrb{\xi_j} e^{i 2\pi x \xi_j},
    \label{eq:def_ftn}
\end{align}
where 
$\xi_j=(j+1/2)\frac{T}{N}$ for $j=-N, -N+1, \dots, N-1$. We would like to choose some $T$ and $N$ such that $f \approx f_T \approx f_{T,N}$. Note that
\begin{align}
\epsilon_T &\defeq \max_{x \in \R} \abs{f(x) - f_T(x)} \\
& = \max_{x \in \R} \abs{ \displaystyle\int_{-\infty}^{-T} \hat{f}(\xi) e^{i 2\pi x \xi} d\xi + \displaystyle\int_{T}^{\infty} \hat{f}(\xi) e^{i 2\pi x \xi} d\xi
}  \\
& \le \displaystyle\int_{-\infty}^{-T} |\hat{f}(\xi)| d\xi + \displaystyle\int_{T}^{\infty} |\hat{f}(\xi)| d\xi.
\label{eq:fourier_approx}
\end{align}
We will focus on the case where $|\hat{f}|$ decays exponentially on $\mathbb{R}$, i.e., there exists constant $c>1$ such that $|\hat{f}(\xi)| = O(c^{-|\xi|})$ as $\xi \to \pm \infty$. Then Eq.~\eqref{eq:fourier_approx} implies that $\epsilon_T=O(c^{-T})$ as $T \to +\infty$. As a consequence, choosing some $T=O(\log(1/\epsilon))$ ensures that $\max_{x \in \R }\abs{f(x) - f_T(x)} \le \epsilon$, for arbitrary $\epsilon >0$. Meanwhile, in Appendix \ref{app:dist_ft_ftn}, we show that when $f$ satisfies certain (weak) conditions, choosing some $N=\tilde{O}(1/\epsilon)$ ensures that $\max_{x \in [0, 1]} \abs{f_T(x) - f_{T,N}(x)} \le \epsilon$, for arbitrary $\epsilon >0$. Then for such $T$ and $N$, we have
\begin{align}
 \abs{f(x) - \frac{T}{N} \sum_{j=-N}^{N-1} \hat{f}\lrrb{\xi_j} e^{i 2\pi x \xi_j}} \le 2\epsilon, & ~\forall x \in [0, 1].  
 \label{eq:f_ftn_dist}
\end{align}
This gives a Fourier approximation of $f$ on the interval $[0, 1]$, which is sufficient for our purpose.

Now suppose we have a Fourier approximation of the booster function $f$:
\begin{align}
    f(x) \approx \sum_{j=1}^K \alpha_j e^{i t_j x}
    \label{eq:f_fourier_approx}
\end{align}
for some $\alpha_j \in \C \setminus \lrcb{0}$ and $t_j \in \R$. Then we have a Fourier approximation of the booster operation $f(H)$ as well:
\begin{align}
    f(H) \approx \sum_{j=1}^K \alpha_j e^{i t_j H}.
    \label{eq:fh_fourier_approx}
\end{align}
{That is}, $f(H)$ is close to a linear combination of the unitary operations $e^{i t_j H}$'s. This prompts us to use the LCU method to implement $f(H)$ approximately and probabilistically. Specifically, suppose we want to implement a linear operation $A=\sum_{j=1}^K \alpha_j U_j$ on a Hilbert space $\mathcal{H}$, where $\alpha_j \in \C \setminus \lrcb{0}$ and $U_j$ is a unitary operation on $\mathcal{H}$ for each $j$. The LCU method requires a $K$-dimensional ancilla and a unitary operation $V$ on the ancilla such that 
\begin{align}
V\ket{0} = \frac{1}{\sqrt{|\overrightarrow{\alpha}|_1}} \sum_{j=1}^{K} \sqrt{|\alpha_j|}\ket{j},
\label{eq:def_V_lcu}
\end{align}
where $\abs{\vec \alpha}_1 \defeq \sum_{j=1}^K \abs{\alpha_j}$. Moreover, let $O$ be a unitary operation on the joint system such that 
\begin{align}
O=\sum_{j=1}^K \ketbra{j}{j} \otimes \frac{\alpha_j}{\abs{\alpha_j}} U_j.    
\label{eq:def_O_lcu}
\end{align}
Then a direct calculation verifies that 
\begin{align}
\bra{0} (V^\dagger  \otimes I) O (V \otimes I) \ket{0} \ket{\psi} = \dfrac{A \ket{\psi}}{\abs{\vec \alpha}_1}
\end{align}
for all $\ket{\psi} \in \mathcal{H}$. {That is}, after applying $V$ on the ancilla, $O$ on the joint system and $V^\dagger$ on the ancilla, we measure the ancilla in the standard basis, and if the measurement outcome corresponds to $\ket{0}$, the state of the main system becomes $\frac{A\ket{\psi}}{\norm{A\ket{\psi}}}$, and this event happens with probability $\frac{\norm{A\ket{\psi}}^2}{\abs{\vec \alpha}_1^2}$. Figure \ref{fig:lcu_circuit} illustrates the quantum circuit for the LCU method.

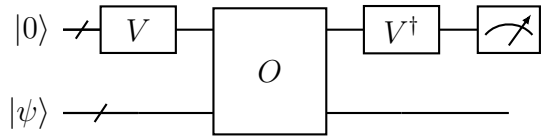
\begin{figure}[htbp]
\centering
\begin{quantikz}
\lstick{$\ket{0}$} &  \gate[wires=1][1cm]{V} \qwbundle{} &
\gate[wires=2][1.5cm]{O} & \gate[wires=1][1cm]{V^\dagger} & \meter{} & 
\\
\lstick{$\ket{\psi}$} & \qw \qwbundle{}  & & \qw & \qw & &
\end{quantikz}
\caption{The LCU circuit for implementing a linear operation $A=\sum_{j=1}^K \alpha_j U_j$ probabilistically. Here $V$ and $O$ are given by Eqs.~\eqref{eq:def_V_lcu} and \eqref{eq:def_O_lcu} respectively.}
\label{fig:lcu_circuit}
\end{figure}

Now we apply the LCU method to Eq.~\eqref{eq:fh_fourier_approx} in which $\alpha_j$ and $t_j$ are given by Eq.~\eqref{eq:f_ftn_dist}. Then we obtain a quantum circuit for implementing the booster operation $f(H)$ approximately and probabilistically. This circuit uses $O(\log{N})$ ancilla qubits and needs to evolve the Hamiltonian $H$ up to time $O(T)$. When the LCU procedure succeeds, it produces the quantum state 
\begin{align}
\dfrac{f_{T,N}(H)\ket{\psi}}{\norm{f_{T,N}(H) \ket{\psi}}} \approx \dfrac{f_T(H)\ket{\psi}}{\norm{f_T(H)\ket{\psi}}} \approx \dfrac{f(H)\ket{\psi}}{\norm{f(H)\ket{\psi}}}
\end{align}
and this event happens with probability
\begin{align}
p_{succ}(f_{T,N})
&\defeq  \dfrac{\bra{\psi}f_{T,N}^\dagger(H)f_{T,N}(H)\ket{\psi}}{\left (\frac{T}{N}\sum_{j=-N}^{N-1} \left |\hat{f}\left ( (j+1/2)\frac{T}{N} \right ) \right |\right )^2} \\
&\approx \dfrac{\bra{\psi}f_T^\dagger(H)f_T(H)\ket{\psi}}{\left (\displaystyle\int_{-T}^T |\hat{f}_T(\xi)| d\xi \right )^2} \\
&\approx \dfrac{\bra{\psi}f^\dagger(H)f(H)\ket{\psi}}{\left (\displaystyle\int_{-\infty}^\infty |\hat{f}(\xi)| d\xi \right )^2}. 
\label{eq:succ_prob}
\end{align}  
In particular, {if $\hat{f}$ is real and even, then $f$ is also real and even. In this case, if $\hat{f}$ is also non-negative}, then the denominator on the RHS of Eq.~\eqref{eq:succ_prob} can be simplified by noting that
\begin{align}
    \displaystyle\int_{-\infty}^\infty \hat{f}(\xi) d\xi    
    = f(0).
\end{align}
Consequently, the success probability of the LCU method in this case is 
\begin{align}
p_{succ}(f_{T,N}) \approx \dfrac{\bra{\psi}f^2(H)\ket{\psi}}{f^2(0)}.
\label{eq:succ_prob2}
\end{align}

\subsection{Designing the booster function}
\label{subsec:design_booster_function}
It remains to design the booster function $f$. To this end, we need certain prior information about the eigenvalues of the Hamiltonian $H$ and the overlaps between the input state $\ket{\psi}$ and the eigenstates of $H$. Specifically, we assume without loss of generality that $H$ has eigenvalues between $0$ and $1$ \footnote{If $H$ has eigenvalues between $\lambda_{\operatorname{min}}$ and $\lambda_{\operatorname{max}}$, then we transform it into $\bar{H}\defeq (H-\lambda_{\operatorname{min}}I)/(\lambda_{\operatorname{max}} - \lambda_{\operatorname{min}})$ which shares the same eigenstates with $H$, and apply our method to $\bar{H}$.}. Furthermore, to facilitate the construction of $f$, we use models for the spectrum of $H$ and the overlaps between $H$'s eigenstates and $\ket{\psi}$. {Specifically}, we {approximate $H$ as}
\begin{align}
    H = {\displaystyle\int_0^1 z \indicator(z) \ket{z}\bra{z} dz},     
    \label{eq:def_H}
\end{align}
where $S \subseteq [0, 1]$ consists of $H$'s eigenvalues, ${\{\ket{z}: z \in S\}}$ are orthonormal states, $\indicator$ is the indicator function of $S$, and {we approximate $\ket{\psi}$ as}
\begin{align}
    \ket{\psi} = {\displaystyle\int_0^1 \psi(z) \ket{z} dz},    
    \label{eq:def_psi}
\end{align}
and 
${q(z)\defeq|\psi(z)|^2}$
is a probability distribution on $[0, 1]$ such that $q(z)=q(z)\indicator(z)$ (i.e. $q(z)=0$ for all $z \not\in S$).

{In practice, we often deal with quantum systems with finite-dimensional Hilbert spaces. For such a system, $H$ has a finite number of eigenvalues and eigenstates, and $\indicator$ should be replaced by a sum of Dirac delta functions, and $q(z)$ should be a discrete distribution. However, if we treat $q(z)$ this way, many properties of $f(H)\ket{\psi}$ that we are interested in will be expressed as functions of a discrete sum over exponentially many terms which is difficult to compute. To mitigate this issue, we approximate $q(z)$ with a continuous distribution so that those properties of $f(H)\ket{\psi}$ can be approximated by functions of a continuous integral which is often easier to calculate. This makes the design of the booster function $f$ more convenient, as will be seen later.} 

Let us consider the impact of the booster operation $f(H)$ on the input state $\ket{\psi}$. Applying $f(H)$ on $\ket{\psi}$ yields the unnormalized state
\begin{align}
    f(H) \ket{\psi} = {\displaystyle\int_0^1 f(z) \psi(z) \ket{z} dz},    
\end{align}
whose squared norm is
\begin{align}
    \norm{f(H) \ket{\psi}}^2 = \bra{\psi} f^\dagger(H) f(H) \ket{\psi} = {\displaystyle\int_0^1 |f(z)|^2  q(z) dz}.    
\end{align}
The expectation of $H$ with respect to $f(H) \ket{\psi}$ is

\begin{align}
    \bra{\psi} f^\dagger(H) H f(H) \ket{\psi} = {\displaystyle\int_0^1 z |f(z)|^2  q(z) dz}.    
\end{align}
Thus, the energy of the normalized state $\frac{f(H) \ket{\psi}}{\norm{f(H) \ket{\psi}}}$ with respect to $H$ is
\begin{align}
    E(f) \defeq \dfrac{\bra{\psi} f^\dagger(H) H f(H) \ket{\psi}}{\bra{\psi} f^\dagger(H)f(H) \ket{\psi}} 
    ={\dfrac{\displaystyle\int_0^1 z |f(z)|^2  q(z) dz}{\displaystyle\int_0^1 |f(z)|^2 q(z) dz}}.
    \label{eq:boosted_energy}
\end{align}
Meanwhile, for arbitrary $\lambda \in [0, 1]$, let 
\begin{align}
    P_{\le \lambda} = {\displaystyle\int_0^\lambda \indicator(z) \ketbra{z}{z} dz}    
\end{align}
be the projection operator onto the subspace spanned by the eigenstates of $H$ whose energies are at most $\lambda$. We are interested in the total overlap between the state $\frac{f(H) \ket{\psi}}{\norm{f(H) \ket{\psi}}}$ and such eigenstates of $H$, i.e. 
\begin{align}
O_{\le \lambda}(f) \defeq \dfrac{\bra{\psi}f^\dagger (H) P_{\le \lambda} f(H) \ket{\psi}}{\bra{\psi}f^\dagger (H)  f(H) \ket{\psi}}
= {\dfrac{\displaystyle\int_0^\lambda  |f(z)|^2  q(z) dz}{\displaystyle\int_0^1 |f(z)|^2 q(z) dz}},    
    \label{eq:boosted_overlap}
\end{align}
where in the second step we use the fact that 
$q(z)=q(z)\indicator(z)$ for all $z \in [0, 1]$. One can see that if $|f|$ decreases monotonically on $[0, 1]$, then $E(f) \le \bra{\psi} H \ket{\psi}$ and $O_{\le \lambda}(f) \ge \bra{\psi} P_{\le \lambda} \ket{\psi}$. {In other words}, $f(H)$ reduces the energy of the input state with respect to $H$ while increasing its overlap with the low-energy eigenstates of $H$.

Ideally, we want $|f|$ to decay as fast as possible on the interval $[0, 1]$, so that the high-energy eigenstates of $H$ get suppressed to a large extent. On the other hand, steep $|f|$ often means that $f \approx f_T$ only for large $T$, and hence it is expensive to implement $f(H)$ using the LCU method. Furthermore, we need to make sure that $f(H)$ can be realized with sufficiently high probability, as it is non-unitary and cannot be realized with certainty. Our goal is to find a function $f$ that achieves the balance among these factors.

Let us first mention a strategy that attempts to directly optimize $f_{T,N}$ (as defined in Eq.~\eqref{eq:def_ftn}). Recall that the LCU method allows us to implement $f_{T,N}(H)$ probabilistically. Given prior information $S$ and $q(x)$ about the Hamiltonian $H$ and the input state $\ket{\psi}$, as well as $T$ (which affects the maximal evolution time of $H$ in the LCU circuit) and $N$ (which determines the number of ancilla qubits in the LCU circuit), we find a high-quality $f_{T,N}$ by solving the following optimization problem: 
\begin{align}
    \max\limits_{\lrcb{\hat{f}_j}}  & \quad O_{\le \lambda}(f_{T,N}), \nonumber \\    
    \textrm{s.t.} & \quad p_{succ}(f_{T,N}) \ge p_0,
    \label{eq:opt_prob}
\end{align}
where the variables are $\{\hat{f}_j\defeq\hat{f}((j+1/2)T/N):~j=-N,\dots,N-1\}$ (which are complex numbers), $f_{T,N}$ is given by Eq.~\eqref{eq:def_ftn}, 
$\lambda$ is an upper bound on the ground state energy of $H$, 
$O_{\le \lambda}$ is given by Eq.~\eqref{eq:boosted_overlap},
$p_{succ}$ is given by Eq.~\eqref{eq:succ_prob}, and $p_0$ is the minimum desired success probability. This is a high-dimensional continuous and constrained optimization problem, so in principle it can be solved by any well-established optimization algorithm on a classical computer. However, given the large number of complex variables and the non-convexity of the objective function and the constraint in these variables, this approach takes a long time and hence is impractical.

To circumvent the above issue and make our scheme practical, we confine $\hat{f}$ to a class of parameterized functions $\hat{f}_{\vec \theta}$ (e.g. Gaussian functions), where $\vec \theta=(\theta_1,\dots,\theta_k)$ are the parameters, and find the optimal solution of Problem \eqref{eq:opt_prob} within this class of functions. This way we only need to solve a low-dimensional optimization problem, which does not take much time on a classical computer. 

Formally, let $f_{\vec \theta}$ be the inverse Fourier transform of $\hat{f}_{\vec \theta}$, i.e. 
\begin{align}
f_{\vec \theta}(x) = \displaystyle\int_{-\infty}^\infty \hat{f}_{\vec \theta}(\xi) e^{i 2\pi x \xi} d\xi.   
\end{align}
To make the booster operation $f_{\vec \theta}(H)$ perform well and easy to implement, we demand that $f_{\vec \theta}$ and $\hat{f}_{\vec \theta}$ satisfy the following requirements:
\begin{itemize}
    \item $|\hat{f}_{\vec \theta}|$ needs to decay exponentially on $\mathbb{R}$, so that we can guarantee that \\
    $\norm{f_{\vec \theta}(H)-f_{\vec \theta;T}(H)} \le \epsilon$ by choosing some $T=O(\log(1/\epsilon))$ as $\epsilon \to 0$, which implies that $f_{\vec \theta}(H)$ can be approximately implemented by a limited-depth circuit.
    \item $|f_{\vec \theta}|$ needs to decrease monotonically on $[0, 1]$, so that $f_{\vec \theta}(H)$ suppresses the high-energy eigenstates of $H$. In fact, we want $|f_{\vec \theta}|$ to decrease as fast as possible on $[0, 1]$, so that $f_{\vec \theta}(H)$ suppresses the high-energy eigenstates of $H$ to a large extent.
\end{itemize}
Here we recommend two classes of functions that meet these conditions: Gaussian and hyperbolic secant. Both of them also have the nice property that the Fourier transform returns a function of the same form as the input function. {That is},
\begin{itemize}
    \item Gaussian function and its Fourier transform: 
    \begin{align}
        f_a(x)=e^{-ax^2} \leftrightarrow  \hat{f}_a(\xi) = \sqrt{\frac{\pi}{a}}e^{-\frac{(\pi \xi)^2}{a}};
    \end{align}
    \item Hyperbolic secant (hsec) function and its Fourier transform: 
    \begin{align}
        f_a(x)=\sech{ax}  
        \leftrightarrow
        \hat{f}_a(\xi) = \frac{\pi}{a}\sech{\frac{\pi^2}{a} \xi}.
    \end{align}
\end{itemize}
By contrast, the exponential function $f_a(x) = e^{-a|x|}$ can be an effective booster choice for suppressing the high-energy eigenstates, but its Fourier transform decays more slowly (i.e. quadratically), resulting in a higher-depth circuit for some fixed $a$.

We acknowledge that the optimal choice of the parameterized functions depends on the specific situation (including $S$ and $q(x)$), and leave it as future work to develop a method for this task.

Once we determine the class of parameterized functions $f_{\vec \theta}$, we find the optimal parameters $\vec \theta$ by solving the following optimization problem:
\begin{align}
    \max\limits_{\vec \theta}  & \quad O_{\le \lambda}(f_{\vec \theta}), \nonumber \\    
    \textrm{s.t.} & \quad p_{succ}(f_{\vec \theta}) \ge p_0, \nonumber \\
    & \displaystyle\int_{-\infty}^{-T} |\hat{f}_{\vec \theta}(\xi)| d\xi + \displaystyle\int_{T}^{\infty} |\hat{f}_{\vec \theta}(\xi)| d\xi \le \delta, \nonumber \\
    & \abs{f_{\vec \theta}(0)} = 1.
    \label{eq:opt_prob2}
\end{align}
{Here $p_{succ}(f_{\vec \theta})$ is defined as 
\begin{align}
p_{succ}(f_{\vec \theta})\defeq \frac{\bra{\psi} f_{\vec \theta}^\dagger(H) f_{\vec \theta}(H) \ket{\psi}}{\lrrb{\int_{-\infty}^{\infty} |\hat{f}_{\theta}(\xi)| d\xi}^2}.    
\end{align} 
In Problem \eqref{eq:opt_prob2}, the second constraint is simply "RHS of Eq.~\eqref{eq:fourier_approx} $\le \delta$", where $\delta>0$ is determined by the method in Appendix \ref{app:choose_delta}. The idea here is that when $\delta$ is small, Eq.~\eqref{eq:fourier_approx} implies that $f_{\vec \theta} \approx f_{T; \vec \theta}$ and hence $O_{\le \lambda}(f_{\vec \theta}) \approx O_{\le \lambda}(f_{T; \vec \theta})$. In other words, this constraint guarantees that the objective function $O_{\le \lambda}(f_{\vec \theta})$ accurately captures the quality of the real booster implemented by the LCU circuit. Appendix \ref{app:choose_delta} gives a detailed relationship between $\delta$ and $|O_{\le \lambda}(f_{\vec \theta})-O_{\le \lambda}(f_{T; \vec \theta})|$ from which an appropriate $\delta$ can be derived. From now on, we will refer to $\delta$ as the \emph{implementation error}, as it is an upper bound on $\norm{f_{\vec \theta}(H) - f_{T; \vec \theta}(H)}$. Meanwhile,} the last constraint $\abs{f_{\vec \theta}(0)} = 1$ is not strictly necessary, but we add it to normalize $f_{\vec \theta}$,  as $\delta$ depends on $|f_{\vec \theta}(0)|=\max_{x \in [0, 1]} \abs{f_{\vec \theta}(x)}$. Then provided that $N$ is sufficiently large, we have $f_{\vec \theta} \approx f_{T; \vec \theta} \approx f_{T,N; \vec \theta}$. It follows that 
$O_{\le \lambda}(f_{\vec \theta}) \approx O_{\le \lambda}(f_{T;\vec \theta}) \approx O_{\le \lambda}(f_{T,N;\vec \theta})$ and $p_{succ}(f_{\vec \theta}) \approx p_{succ}(f_{T;\vec \theta}) \approx p_{succ}(f_{T,N;\vec \theta})$. {In other words}, the actions of $f(H)$, $f_T(H)$ or $f_{T,N}(H)$ on the input state $\ket{\psi}$ will be similar. Note that {if $\hat{f}_{\vec \theta}$ is real and even, then $f_{\vec \theta}$ is also real and even. In this case, if $\hat{f}_{\vec \theta}$ is also non-negative, then $p_{succ}(f_{\vec \theta})$ can be approximated by the RHS of Eq.~\eqref{eq:succ_prob2}}.

\subsection{Example: Gaussian booster}
\label{subsec:gaussian_booster_example}
Now we demonstrate our method on a simple example. Suppose $H=\sum_{j=1}^D \lambda_j \ket{\lambda_j} \bra{\lambda_j}$ has eigenvalues $\lambda_j$'s and eigenstates $\ket{\lambda_j}$'s, where $0 \le \lambda_1 \le \lambda_2 \le \dots \le \lambda_D \le 1$ and the $\lambda_j$'s spread almost uniformly between $0$ and $1$. Moreover, suppose $\ket{\psi}=\sum_{j=1}^D \mu_j \ket{\lambda_j}$ is the state produced by an ansatz circuit, where $\gamma_j \defeq \abs{\mu_j}^2$ decays almost exponentially in $j$. In this case, we set $S=[0, 1]$ and $q(x)={\beta e^{-\beta x}}/{(1-e^{-\beta})}$ for some $\beta>0$ in our framework. 

We confine the booster function $f$ to Gaussian functions $f_a(x)=e^{-ax^2}$ and find the optimal parameter $a$ as follows. By direct calculation, we get
\begin{align}
\bra{\psi} f_a^2(H) \ket{\psi} &= \dfrac{\beta}{1-e^{-\beta}} \displaystyle\int_0^1 
e^{-2ax^2-\beta x} dx \\
&= \dfrac{\sqrt{\pi} \beta e^{\beta^2/(8a)}}{\sqrt{8a} (1-e^{-\beta})} \left [ \operatorname{erf}(\sqrt{2a}+\beta/\sqrt{8a})) - \operatorname{erf}(\beta/\sqrt{8a})\right],
\label{eq:gaussian_psifpsi}
\end{align}
and
\begin{align}
\bra{\psi} f_a(H) P_{\le \lambda} f_a(H) \ket{\psi} &=    
\dfrac{\beta}{1-e^{-\beta}} \displaystyle\int_0^\lambda
e^{-2ax^2-\beta x} dx \\
&= \dfrac{\sqrt{\pi} \beta e^{\beta^2/(8a)}}{\sqrt{8a} (1-e^{-\beta})} \left [ \operatorname{erf}(\lambda \sqrt{2a} + \beta/\sqrt{8a})) - \operatorname{erf}(\beta/\sqrt{8a})\right].
\label{eq:gaussian_psifpfpsi}
\end{align}
In addition, we have
\begin{align}
\displaystyle\int_{-\infty}^{-T} |\hat{f}_{a}(\xi)| d\xi + \displaystyle\int_{T}^{\infty} |\hat{f}_{a}(\xi)| d\xi = 1 - \operatorname{erf}(\pi T / \sqrt{a}).
\label{eq:gaussian_error}
\end{align}
Also, note that $f_a(0)=1$. So we solve the following optimization problem: 
\begin{align}
    \max\limits_{a > 0}  & \quad  \dfrac{\textrm{RHS~of~Eq.}~\eqref{eq:gaussian_psifpfpsi}}{\textrm{RHS~of~Eq.}~\eqref{eq:gaussian_psifpsi}}, \nonumber \\
    \textrm{s.t.} & \quad \textrm{RHS~of~Eq.}~\eqref{eq:gaussian_psifpsi} \ge p_0, \nonumber\\
    & \quad \textrm{RHS~of~Eq.}~\eqref{eq:gaussian_error} \le \delta {,}
    \label{prob:gaussian_booster}
\end{align}
where $\lambda \in \R$ is the best known upper bound on $\lambda_1$, and $\delta >0$ is determined by the method in Appendix \ref{app:choose_delta}. 

{This optimization problem can be solved efficiently on a classical computer as follows. Note that RHS of Eq.~\eqref{eq:gaussian_psifpsi} is a monotonically decreasing function of $a$, and RHS of Eq.~\eqref{eq:gaussian_error} is a monotonically increasing function of $a$. Let $a_1$ and $a_2$ be the solutions of 
RHS~of~Eq.~\eqref{eq:gaussian_psifpsi} $= p_0$ and 
RHS~of~Eq.~\eqref{eq:gaussian_error} $= \delta$, respectively. Both $a_1$ and $a_2$ can be computed (within high accuracy) quickly on a classical computer via binary search. Then the set of feasible solutions to Problem \ref{prob:gaussian_booster} is $(0, \min(a_1, a_2)]$. Meanwhile, we claim that the objective function $\frac{\textrm{RHS~of~Eq.}~\eqref{eq:gaussian_psifpfpsi}}{\textrm{RHS~of~Eq.}~\eqref{eq:gaussian_psifpsi}}$ is a monotonically increasing function of $a$. This implies that the optimal solution to Problem \ref{prob:gaussian_booster} is $a=\min(a_1, a_2)$. To prove this claim, note that 
\begin{align}
    \dfrac{\textrm{RHS~of~Eq.}~\eqref{eq:gaussian_psifpfpsi}}{\textrm{RHS~of~Eq.}~\eqref{eq:gaussian_psifpsi}}
    =\dfrac{\int_0^\lambda e^{-2ax^2-\beta x} dx}{\int_0^1 e^{-2ax^2-\beta x} dx}
    =\dfrac{g(a)}{g(a)+h(a)},
\end{align}
where $g(a)\defeq \int_0^\lambda e^{-2ax^2-\beta x} dx$ and $h(a)\defeq \int_\lambda^1 e^{-2ax^2-\beta x} dx$. For arbitrary $a, \zeta>0$, we have
\begin{align}
    \dfrac{g(a+\zeta)}{g(a)}
    =\dfrac{\int_0^\lambda  e^{-2(a+\zeta)x^2-\beta x} dx}{\int_0^\lambda e^{-2ax^2-\beta x} dx}
    \ge e^{-2\zeta \lambda^2}
    \ge \dfrac{\int_\lambda^1 e^{-2(a+\zeta)x^2-\beta x} dx}{\int_\lambda^1 e^{-2ax^2-\beta x} dx}
    = \dfrac{h(a+\zeta)}{h(a)},
\end{align}
which implies 
\begin{align}
        \dfrac{g(a+\zeta)}{g(a+\zeta)+h(a+\zeta)} \ge \dfrac{g(a)}{g(a)+h(a)},
\end{align}
as claimed.} 

Once we find the optimal parameter $a$, the corresponding booster operation $f_a(H)$ is implemented by the LCU method based on the equation:
\begin{align}
f_{a}(x)= \sqrt{\dfrac{\pi}{a}} \displaystyle\int_{-\infty}^\infty e^{-\frac{(\pi \xi)^2}{a}} e^{2\pi i x \xi} d\xi
        \approx 
        f_{T,N;a}(x)=\dfrac{T}{N} \sqrt{\dfrac{\pi}{a}} \sum_{j=-N}^{N-1} e^{-\frac{(\pi \xi_j)^2}{a}} e^{2\pi i x \xi_j},
\label{eq:gaussian_booster_lcu}
\end{align}
where $\xi_j=(j+1/2)T/N$, and $N$ is sufficiently large so that $f_{a}(x) \approx f_{T;a}(x) \approx f_{T,N;a}(x)$ for all $x \in [0, 1]$. Detailed analysis of Gaussian boosters is given in Appendix \ref{app:analysis_gaussian_booster}. 

{We follow the methodology in Section \ref{subsec:implement_booster_operation} to construct the quantum circuit for implementing $f_{T,N;a}(H)$ probabilistically. For ease of implementation, we choose $N$ to be a power of $2$, i.e. $N=2^n$ for an integer $n \ge 1$. The circuit for $f_{T,N;a}(H)$ is a special instance of the circuit in Figure \ref{fig:lcu_circuit}, where $V$ is an $(n+1)$-qubit unitary operation such that
\begin{align}
    V \ket{0^{n+1}} = \frac{1}{\sqrt{\sum_{k=0}^{2N-1} \alpha_k}}\sum_{k=0}^{2N-1} \sqrt{\alpha_k} \ket{k}
\end{align}
in which $\alpha_k \defeq \frac{T}{N}\hat{f}_a(\xi_{k-N}) = \frac{T}{N}\sqrt{\frac{\pi}{a}}e^{-\frac{(\pi \xi_{k-N})^2}{a}}
=\frac{T}{N}\sqrt{\frac{\pi}{a}}e^{-\frac{\pi^2 (k-N+1/2)^2 T^2}{aN^2}}$, and $O$ is a unitary operation on the joint system such that
\begin{align}
    O = \sum_{k=0}^{2N-1} \ketbra{k}{k} \otimes U^{2k-2N+1}
\end{align}
in which $U \defeq e^{i \pi H T / N}$. Here we have used the fact that $\alpha_k$ is real and positive for each $k$ in defining $V$ and $O$. Note that $O$ is similar to the unitary operation in quantum phase estimation (QPE) and can be implemented as follows:
\begin{align}
    O = U^{-2N+1} O_n O_{n-1} \cdots O_0,
\end{align}
where
\begin{align}
O_p \defeq \ketbra{0}{0} \otimes \mathbb{I}
+
\ketbra{1}{1} \otimes U^{2^{p+1}}
\end{align}
acts on the $p$-th ancilla qubit (which corresponds to $k_p$ assuming $k=\sum_{j=0}^n k_j 2^j$ for $k_0,k_1,\dots,k_n \in \lrcb{0,1}$) and the main system, for $p=0,1,\dots,n$. Figure \ref{fig:gaussian_booster_circuit} illustrates a circuit for implementing $O$ in the case $n=2$.
Note that the final time evolution operation $U^{-2N+1}$ applied to the main system can be dropped because it does not alter the ground state overlap of the output state.
}
\begin{figure}[htbp]
\centering
\begin{quantikz}
& \gate[wires=4][1.5cm]{O} 
& \qw 
& \midstick[4,brackets=none]{=} 
& \qw
& \qw 
& \qw 
& \qw 
& \qw 
& \ctrl{3}
& \qw
& \qw 
\\
&  &\qw &  & \qw & \qw & \qw & \ctrl{2}
& \qw
& \qw 
& \qw 
& \qw 
\\
&  &\qw & & \qw  & \ctrl{1}
& \qw
& \qw & \qw & \qw & \qw
& \qw 
\\[1cm]
& \qwbundle{} & \qw  & & \qwbundle{} & \gate{U^2} & \qw
& \gate{U^{2^2}} & \qw
& \gate{U^{2^3}} & \gate{U^{-(2^3-1)}}
& \qw 
\\
\end{quantikz}
\caption{{The quantum circuit for implementing the unitary operation $O$ for Gaussian boosters in the case $n=2$ has the same structure as the quantum phase estimation (QPE) circuit. The final time evolution $U^{-7}$ on the system register can be removed because it does not alter the weights of the eigenstates.}}
\label{fig:gaussian_booster_circuit}
\end{figure}
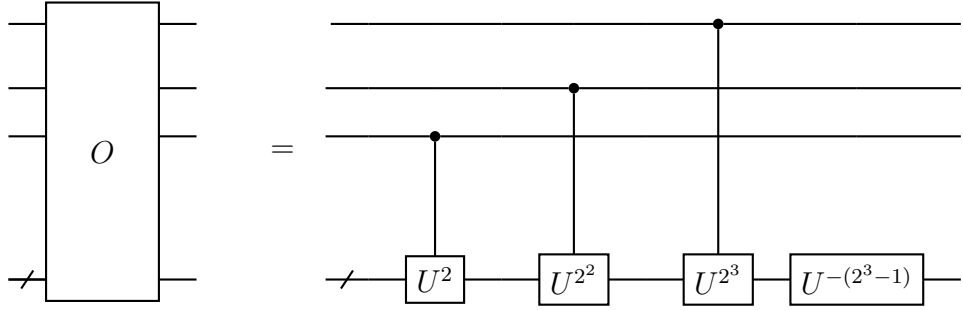

\subsection{Simplified optimization scheme for the Gaussian booster}
\label{subsec:gaussian_simplified}

In the asymptotic regime of large $T$, we expect that the optimized booster will produce an output state with weight concentrated mostly on the ground state and the small remainder on the first excited state.
This allows us to forgo using a weight model of the input state (e.g. the exponential decay model described above) in the optimization.

Furthermore, we will remove the constraints on the success probability $p_0$ and implementation error $\delta$ in the optimization. The success probability is removed so as to focus on the quality of the prepared state instead of the time (in repetitions) needed to implement it.
While the implementation error is ignored as a constraint, we will modify the cost function so that the implementation error plays a direct role in the optimization.

We consider maximizing a lower bound on the ground state overlap that can be evaluated independently of the input state,
\begin{align}
    \quad O_{\le E_{0}}(f_{T;a}) &=\frac{\left|\bra{\psi} f_{T;a}(H)\ket{\lambda_1}\right|^2}{\bra{\psi} f_{T;a}^2(H) \ket{\psi}}\nonumber\\
    &=|\langle b_{T;a} \ket{\lambda_1}|^2\nonumber\\
    &\geq\left(\textup{Re}\left(\langle b_{T;a}\ket{\lambda_1}\right)\right)^2
    \nonumber\\
    &=\left(1-\frac{1}{2}\left\| \ket{\lambda_1}-\ket{b_{T;a}}\right\|^2\right)^2\nonumber\\
    &\geq\left(1-\frac{1}{2}\left(\left\| \ket{\lambda_1}-\ket{b_{a}}\right\|+\left\| \ket{b_{a}}-\ket{b_{T;a}}\right\|\right)^2\right)^2,
    \label{eq:lower_bound_cost_function}    
\end{align}
where we've defined $\ket{b_{T;a}}\equiv\frac{f_{T;a}(H) \ket{\psi}}{\norm{f_{T; a}(H) \ket{\psi}}}$.
{This lower bound to the cost function is maximized by minimizing the sum of the two quantities in Eq. \eqref{eq:lower_bound_cost_function}: the ideal ground state preparation error, $\left\| \ket{\lambda_1}-\ket{b_{a}}\right\|$, and the truncation error, $\left\| \ket{b_{a}}-\ket{b_{T;a}}\right\|$}. As described in 
Appendix \ref{app:asymptotic_performance}, each of these quantities can be upper bounded by expressions involving the Gaussian parameter $a$, the energy gap $\Delta$, the initial ground state overlap $\gamma$, and the truncation level $T$.
{The upper bound for the ideal ground state preparation is:
\begin{align}
    \left\| \ket{\lambda_1}-\ket{b_{a}}\right\| \leq 2 e^{-a \Delta^2} / \gamma,
\end{align}
and the upper bound for the truncation error is:
\begin{align}
\left\| \ket{b_{a}}-\ket{b_{T;a}}\right\|\leq 2\left(1 - \operatorname{erf}(\pi T / \sqrt{a})\right)/\gamma.
\end{align}}
Putting this together, the simplified optimization strategy finds the $a$ which minimizes the sum of these upper bounds on the state preparation error and the truncation error. Noticing that the initial overlap does not impact the optimization, the optimal $a$ is obtained from:
{\begin{align}
    \min_{a}\, \bigg( e^{-a \Delta^2} -  \operatorname{erf}(\pi T / \sqrt{a}) \bigg). \label{eq:simplified_opt}
\end{align}}
In the following section we assess the performance of the resulting boosters through numerical simulation on a specific small molecule example.
For these numerics we use the simplified version of the optimization problem described above, which is suited to the asymptotic performance of the Gaussian booster as described in 
{Appendix} \ref{app:asymptotic_performance}.

\section{Numerical Simulations}
\label{sec:numerics}
\begin{figure}[ht]
    \centering
 \subfloat[Ground state overlap ratio (green) and error in energy (blue) as a function of truncation limit/depth proxy $T$.]{\includegraphics[width=0.45\linewidth]{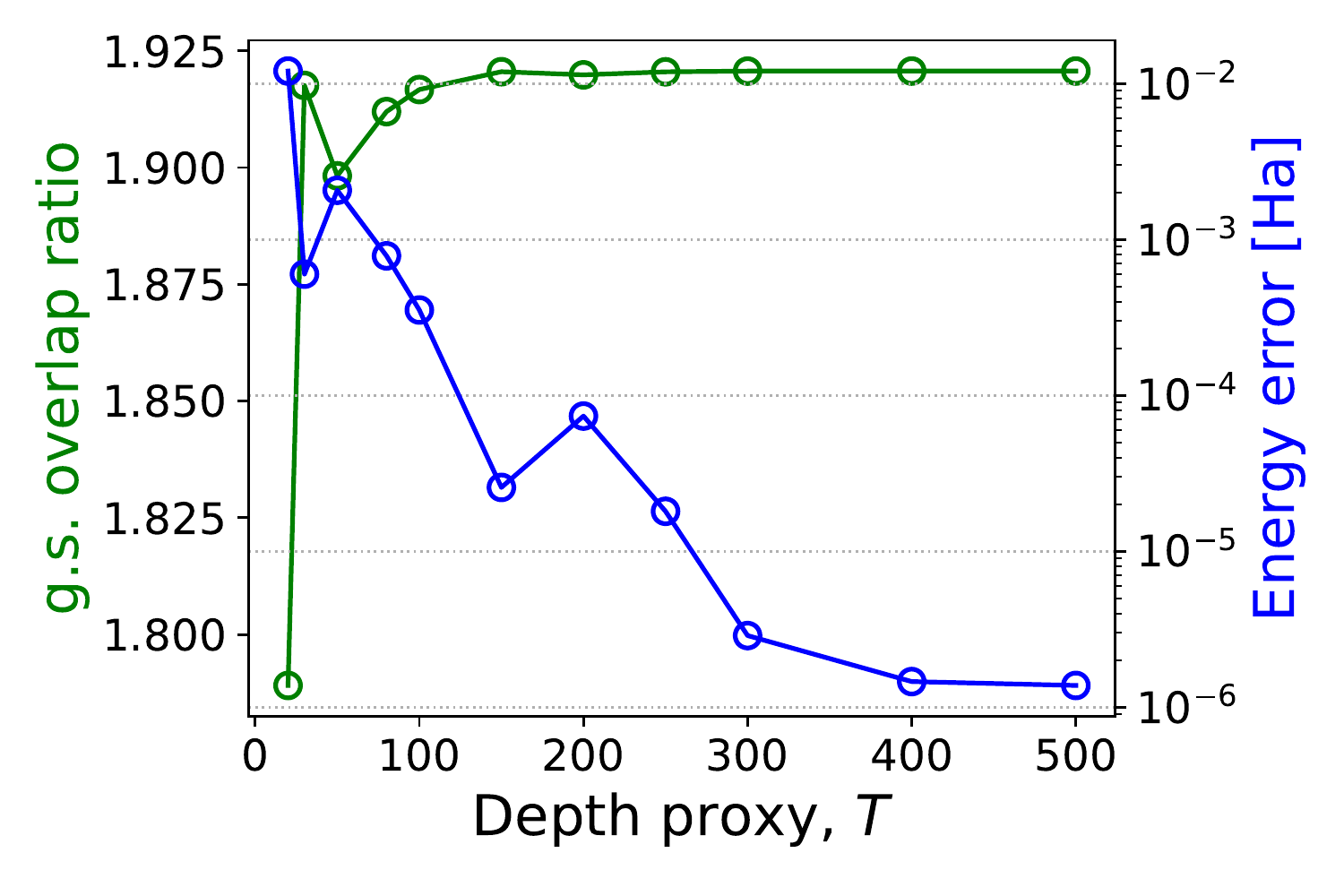}} \hspace{5mm}
 \subfloat[Resulting success probability of the LCU circuits. We show the expected success probability of the ``ideal'' booster (i.e. un-truncated Fourier expansion) in red. In practice, we truncate the Fourier series, and we show the success probability of such ``implemented'' boosters in blue. 
 ]{\includegraphics[width=0.45\linewidth]{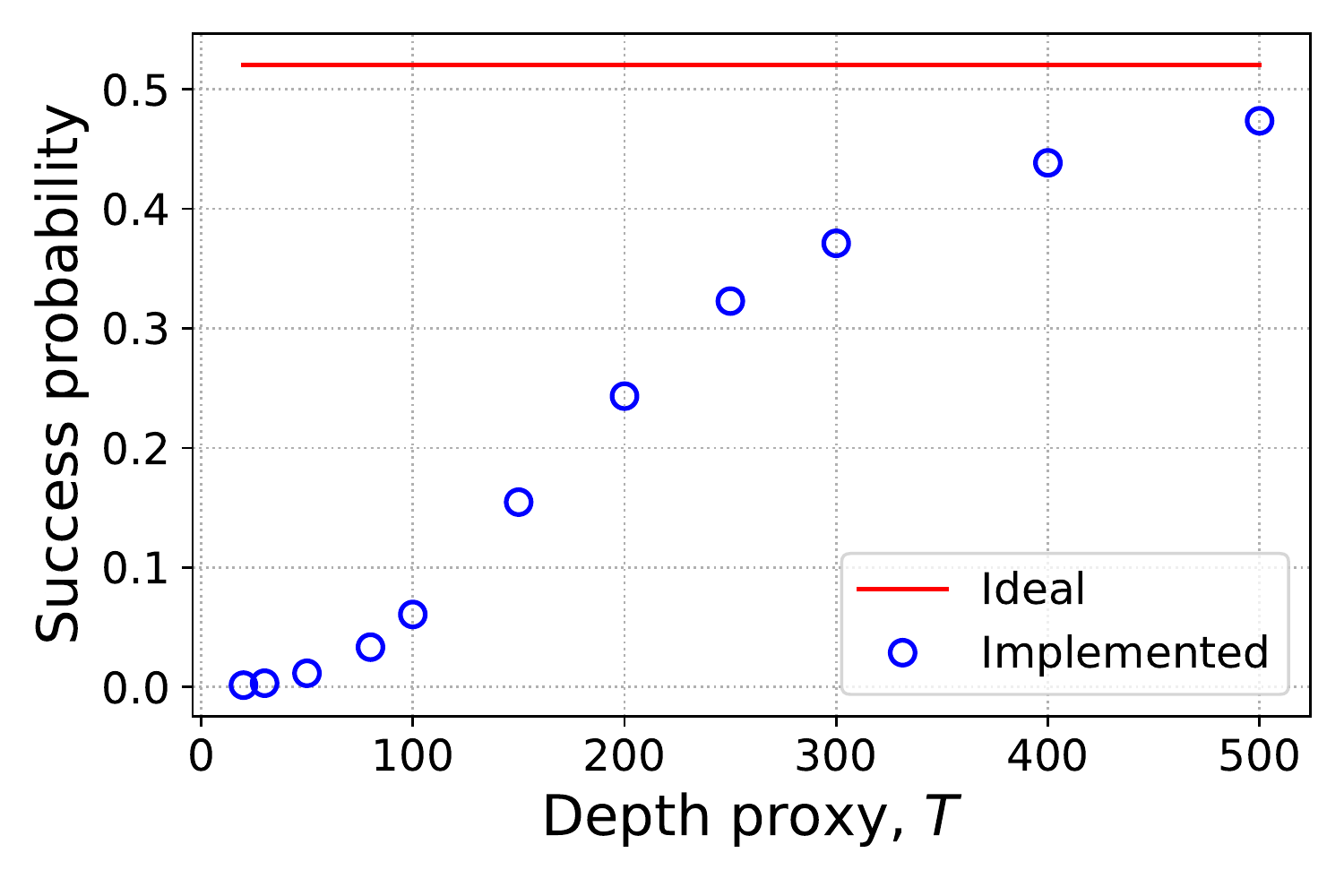}} \\
    \caption{
    Evaluating the performance of the Gaussian booster for preparing the ground state of the $N_2$ molecular system.}
    \label{fig:booster_performance_n2}
\end{figure}

In this section, we provide numerical simulations of the Gaussian booster described in Section \ref{subsec:gaussian_booster_example} using the optimization strategy described in Section \ref{subsec:gaussian_simplified}.
We consider the task of preparing the ground state of the $N_2$ molecular system, considering 6 active electrons in 12 spin-orbitals or qubits in an adapted basis \cite{kottmann2021reducing} at a bond length of $2.0~\angstrom$.
We prepare a mean-field state as the initial state $\ket{\phi_0}$, in which the initial ground state overlap, $\gamma \equiv |\braket{\lambda_1}{\phi_0}|$, is $0.72$. The ground state fidelity is thus $\gamma^2 = 0.52$. Alternatively, there are methods which ensure better overlap with the ground state using short circuit depth (e.g. SPA \cite{kottmann2021optimized}). If we applied boosters to such initializations, compared to boosting the Hartree-Fock state,  we would expect to be able to achieve better overlap.

To determine which Gaussian boosters to use in this setting, as described in Section \ref{subsec:gaussian_simplified} we choose $a$ based on the optimization problem of Eq.~\eqref{eq:simplified_opt}.
In our simulations, we set the spectral gap to $80\%$ of its value (i.e. $0.8 \Delta$) to reflect that in practice, the spectral gap is estimated with limited accuracy.\footnote{The spectral gap, the difference between the lowest and second-lowest energy, is 0.021 Ha before re-scaling.}

Once an optimal $a$ is determined, the performance of the booster can be characterized using several metrics: (1) the ground state overlap ratio, defined as  $\frac{|\langle \lambda_1 | b_{T;a} \rangle|^2}{|\langle \lambda_1 | \psi \rangle|^2}$, where $\ket{b_{T;a}}$ is the normalized state after applying the Gaussian booster and $\ket{\lambda_1}$ is the ground state, (2) the error in energy, i.e. $\langle \phi |H| \phi \rangle - \lambda_1$, and (3) the success probability of implementing the booster.
In Fig.~\ref{fig:booster_performance_n2}, we plot the three quantities as a function of $T$, the truncation level in the truncated Fourier expansion of $f(H)$, or $f_T(H)$. We also consider $T$ as a proxy for the circuit depth as it defines the maximum evolution time in the LCU circuit.
{The circuit implementing the Gaussian booster, as described in the end of Section \ref{subsec:gaussian_booster_example}, involves the same accumulated time evolution duration 
as the operation $c$-$\exp{4\pi i H T}$.
Therefore, measured in terms of number of accumulated $c$-$\exp{2\pi i H}$ operations, the circuit depth is $2T$. For any fixed compilation scheme applied to this controlled time evolution of depth $D$, the circuit depth used in the state preparation booster will be $2TD$.}

From our results, we observe that the booster converts circuit depth into overlap, in which a greater value of $T$ corresponds to a greater ground state overlap ratio or lower error in estimating the ground state energy.
We emphasize that this reliable depth-to-overlap conversion is not a feature of near-term ground state preparation algorithms such as VQE.
Using the error function as our objective, the resulting optimal values of $a$ are on the order of $10^6$ for the values of $T$ we consider. 
At such large values of $a$, for ideal boosters, the ground state overlap ratio saturates to $1/\gamma^2 \approx 1.93$ and the success probability saturates to the initial ground state probability $\gamma^2=0.72^2\approx 0.52$. This latter value is shown in Fig.~\ref{fig:booster_performance_n2}b as a red horizontal line. For the implemented boosters (i.e. with truncation $T$), we observe that for low values of $T$, the success probabilities are significantly lower than the ideal value. This is because at low values of $T$, the actual booster function applied deviates significantly from the ideal booster function of the form $e^{-ax^2}$ as shown in Fig.~\ref{fig:ideal_vs_actual_booster}. As $T$ increases, the success probability approaches the ideal value.  
We also note that the success probability can be further increased by using fixed-point amplitude amplification \cite{yoder2014fixed} at the cost of increased circuit depth.

\begin{figure}[ht]
    \centering
 \subfloat[$T=30$]{\includegraphics[width=0.45\linewidth]{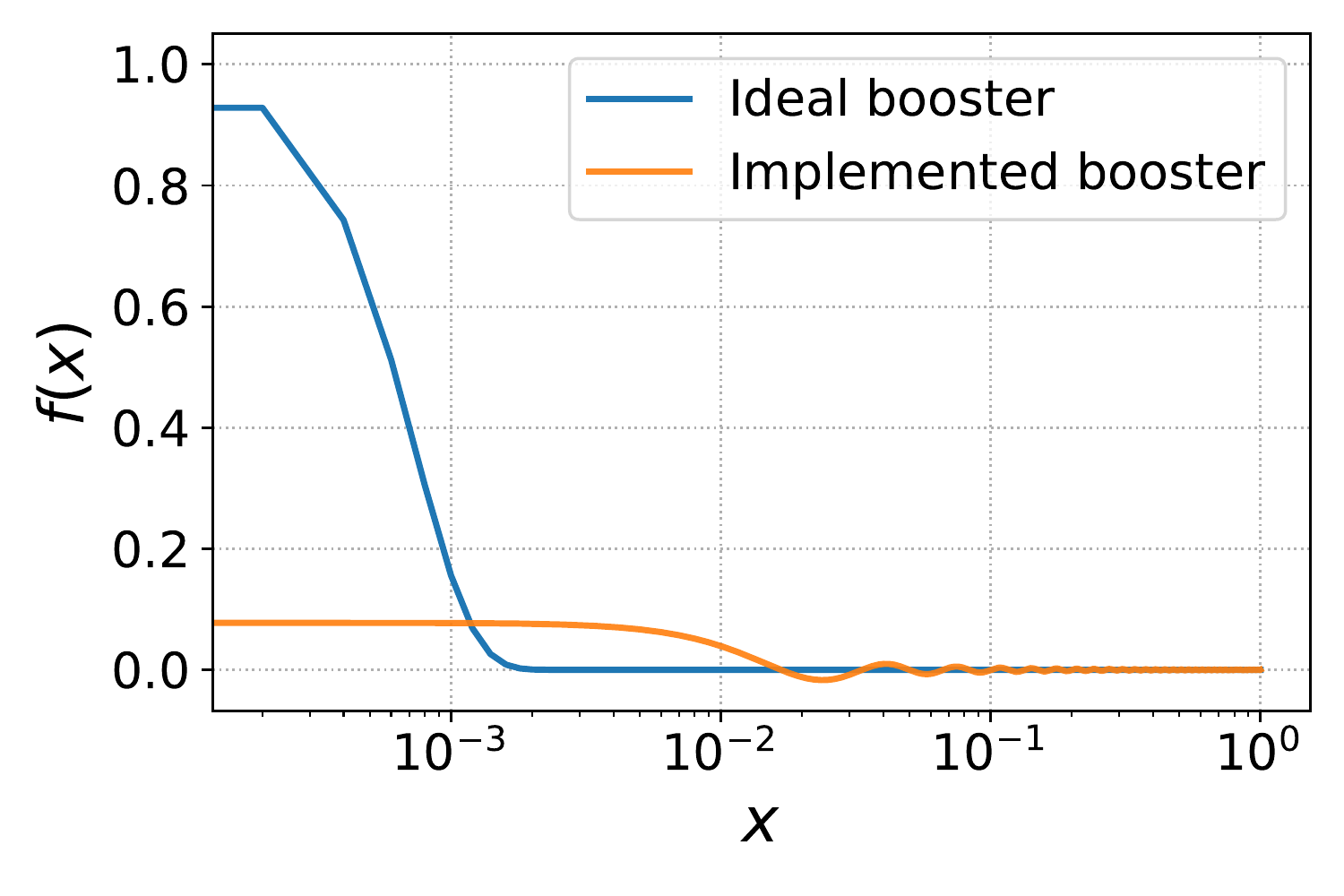}} \hspace{5mm}
 \subfloat[$T=400$]{\includegraphics[width=0.45\linewidth]{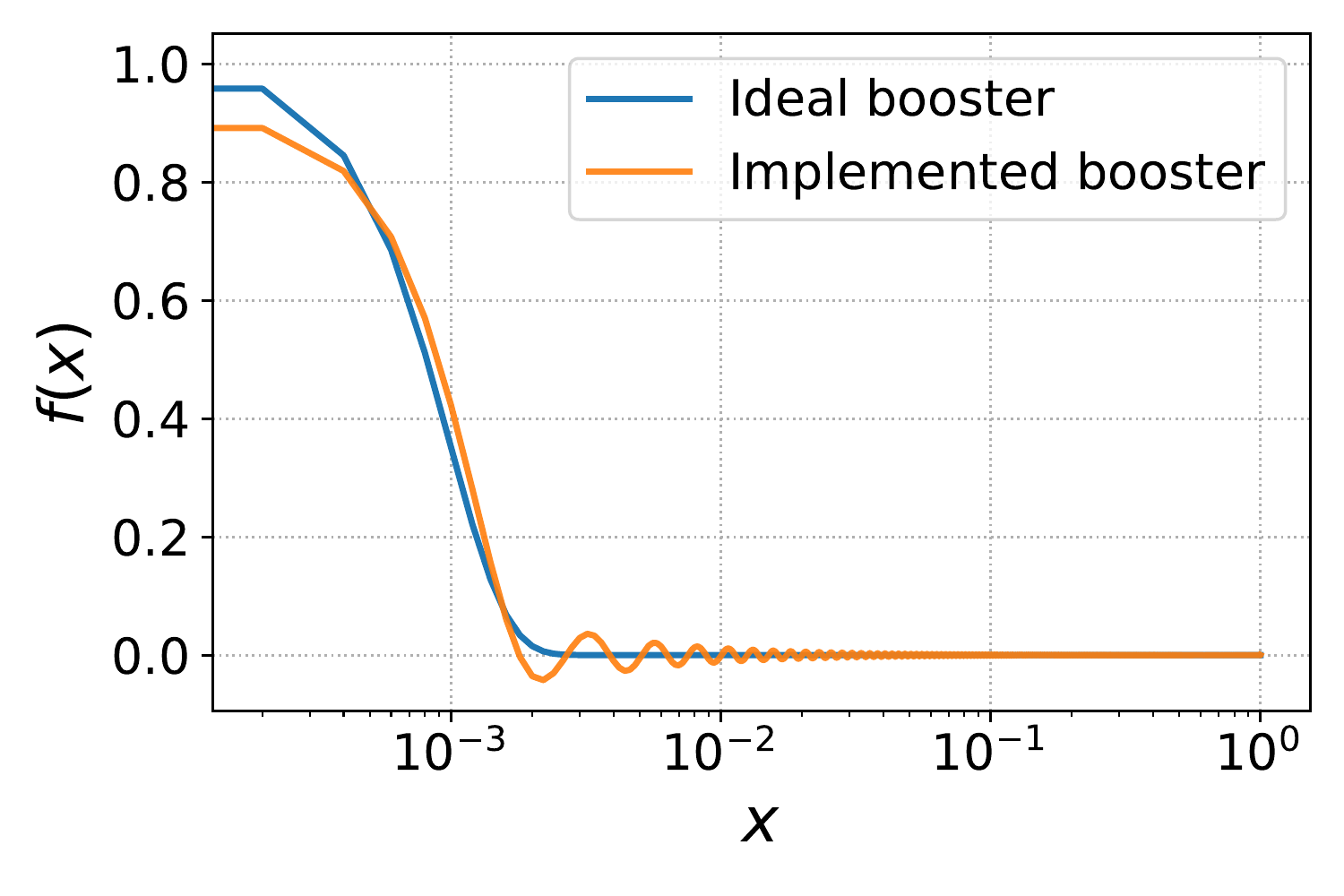}} \\
    \caption{
    Comparing ideal versus implemented booster functions at two different values of $T$.}
    \label{fig:ideal_vs_actual_booster}
\end{figure}

\begin{figure}[ht]
    \centering
    \includegraphics[width=0.5\linewidth]{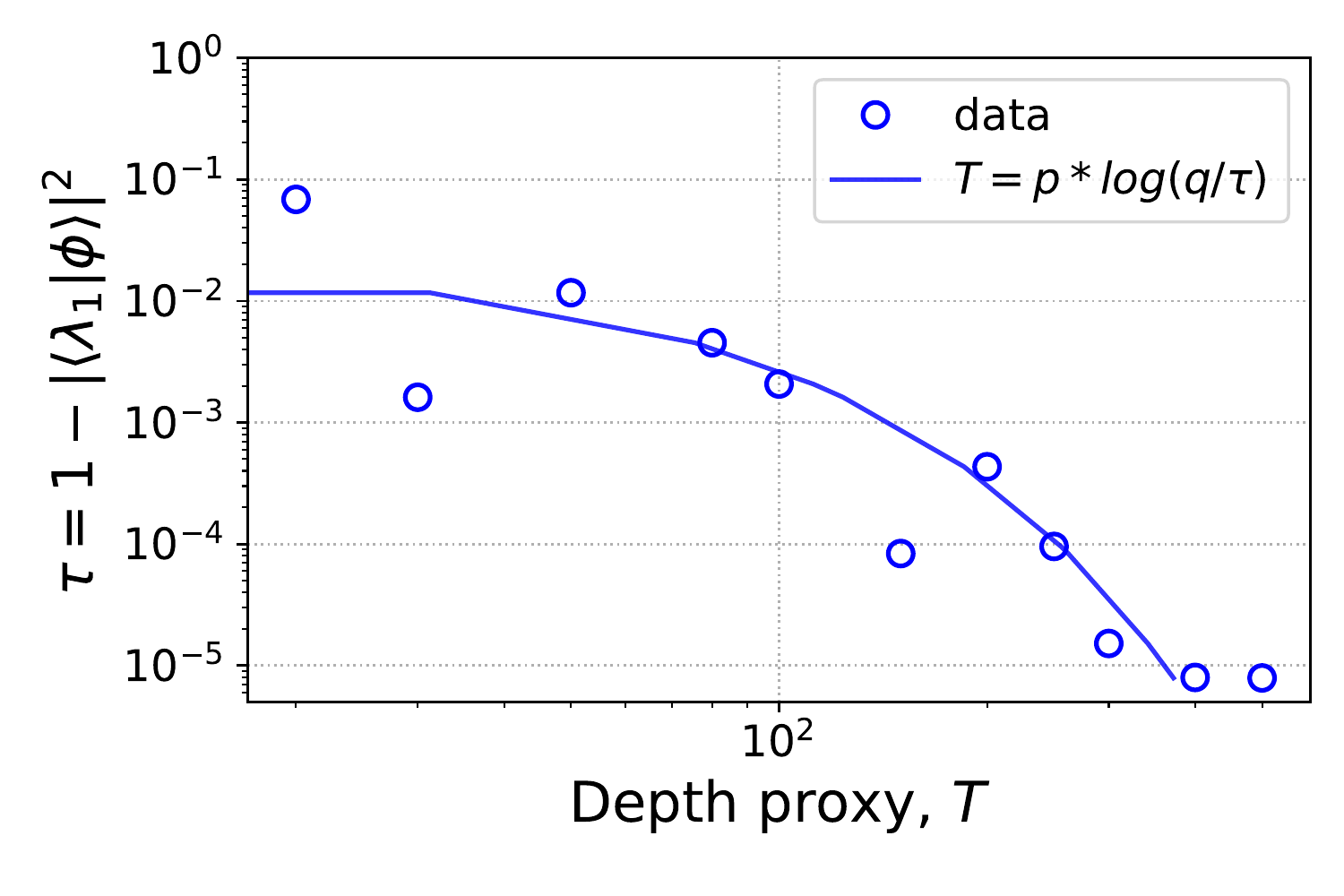}
    \caption{
    Ground state infidelity $\tau$, as a function of the depth proxy $T$. We fit the function of the form 
    $T=p \text{log} (q/\tau)$, with $p \approx 46.72$ and $q \approx 0.023$,
    shown using a solid line.}
    \label{fig:n2_T_scaling}
\end{figure}

Next, we plot the relationship between $T$ and the ground state infidelity $\tau \equiv 1 - |\braket{\lambda_1}{\phi}|^2$ in Fig.~\ref{fig:n2_T_scaling}. 
We show in Appendix~\ref{app:asymptotic_performance} that it is sufficient to choose some $T=\tilde{O}(\log(1/\epsilon))$ where $\epsilon$ and $\tau$ are related by some constant factor. 
We compare the numerical results from this specific example with the analytical predictions by fitting the same functional form in Fig.~\ref{fig:n2_T_scaling} and observe a good fit.
While it might be possible to achieve better performing boosters using a more refined optimization strategy,
our results demonstrate the systematic increase in overlap or decrease in error as $T$ increases.

We briefly provide a back-of-the-envelope calculation comparing the relative cost of {the Gaussian} booster  
to that of the Quantum Phase Estimation algorithm (QPE). {As explained in the end of Section \ref{subsec:gaussian_booster_example}, the circuit operation $O$ used in the Gaussian booster has the same structure as the phase estimation circuit used in QPE.
Therefore, we can compare the depths of these two algorithms in terms of the number of $c$-$e^{2\pi i H}$ applications they accumulate.
Assuming that $H$ has eigenspectrum between 0 and 1, QPE uses $2^{q+1}$ applications of $c$-$e^{2\pi i H}$ to obtain a $q$-bit precision estimate of the phase with high probability.
To compare the circuit depths of the state preparation booster with QPE, we consider the same $N_2$ molecule analyzed above and calculate the number of $c$-$e^{2\pi i H}$ operations needed by QPE to get a chemically-accurate estimate of an eigenenergy.
Chemical accuracy is roughly $10^{-3}$ mHa and the spectrum of the 12-qubit $N_2$ molecule ranges over roughly $10$ mHa.
Therefore, after shifting and rescaling the Hamiltonian such that its spectrum lies within 0 to 1, an energy estimate to within chemical accuracy requires $q=-\log_2(10^{-3}/10)$, or more than 13 bits of precision.
The number of $c$-$e^{2\pi i H}$ needed is therefore roughly 20,000. The Gaussian boosters considered in Figure \ref{fig:booster_performance_n2} use up to $T=500$, which, as described earlier, is roughly the circuit depth of 
$2T=1000$ sequential applications of $c$-$e^{2\pi i H}$.
}

Further investigation is necessary for a more precise comparison of the two methods.
{For example, the number of ancilla qubits used in the Gaussian booster is greater than that used in QPE (c.f. Appendix \ref{app:analysis_gaussian_booster}).}
However, our rough calculation suggests the circuit depth, required for implementing a booster is expected to be a small fraction of the depth needed for QPE.
This gives evidence that state preparation boosters are suited for improving quantum algorithms in the era of early-fault tolerant quantum computers.

\section{Discussion and Outlook}

In this work, we present a framework for improving state preparation with limited quantum circuit depth. The method of \emph{state preparation boosters} reliably converts circuit depth into ground state overlap. This can be used to boost the performance of the variational quantum eigensolver algorithm on early fault-tolerant devices as well as the more far-term method of quantum phase estimation. We leave for future work an analysis of how state preparation boosters can be used to improve the performance of ground state energy estimation. We give a detailed treatment of the specific case of Gaussian boosters, motivated by the performance they achieve while keeping circuit depth relatively low. We carry out simulations of the method on a small (12 qubit) example of the molecule $N_2$. These show that the truncated Gaussian boosters can reliably increase the ground state overlap as circuit depth is increased and they also exhibit the asymptotic scaling derived in Appendix \ref{app:asymptotic_performance}. Below we outline several important future directions to explore in order to better understand and assess the capabilities of boosters. 

First, we have focused on the specific case of Gaussian boosters using controlled-time-evolution circuit primitives (i.e. $c-\exp{iHt}$).
It is worth investigating the performance of alternative booster functions and alternative circuit primitives (e.g. block-encoded Hamiltonians $U = \bigl[ \begin{smallmatrix} H & \cdot \\ \cdot & \cdot\end{smallmatrix}\bigr]$).
Also, it might be possible to refine the booster design process, informed by further numerical investigations.

Along similar lines, it may be worth investigating alternative performance metrics for state preparation.
While boosters reliably convert circuit depth into ground state overlap, particular classes of booster functions may yield better conversion rates than others.
That is, it may be informative to derive a notion of ``efficiency'' of boosters, e.g. amount of overlap gained per unit depth of a booster.
Such an efficiency metric may provide a way to assess and compare general state preparation methods such as VQE or QPE. 

To properly assess the performance of state preparation boosters in practice, we should incorporate the impact of circuit error.
One potential way to account for noise is to consider the circuit depolarizing model as was considered in Ref.~\cite{wang2021minimizing}. 
Using this model, the output state is a mixture of the noiseless output state and the completely mixed state. 
So far we have fixed the circuit depth (or truncation level $T$) of the boosters.
But, by incorporating a model of noise, we might arrive at an optimal choice of booster circuit depth where ideal overlap is balanced with the decay of the quantum coherence.

In addition to circuit depth, another metric associated with boosters is the success probability. 
In the case of ideal boosters (not the truncated boosters analyzed in Fig. \ref{fig:booster_performance_n2}), the more that a booster suppresses high-energy eigenstates, the lower the success probability will be.
In our numerical simulations, we had not designed boosters to accommodate success probability.
But, in practice the success probability will lead to a time overhead of $O(1/p_{success})$ when the booster is used in an algorithm.
One solution, as considered in Section \ref{subsec:design_booster_function}, is to design boosters which yield a success probability above a fixed threshold.
In some cases, this approach may not yield boosters with sufficiently high success probabilities.
In such cases, an alternative strategy is to boost the success probability using amplitude amplification \cite{yoder2014fixed}.
Such techniques cost additional circuit depth and we leave it to future work to analyze the cost-benefit trade-off in the setting of state preparation boosters.
In general, there appears to be a trade-off between the booster depth and success probability. Depending on the constraint on the circuit depth, it may be preferable to apply multiple rounds of shallower boosters in sequence rather than applying a deep booster, with a higher overall success probability, once.
Further studies are necessary to better understand the trade-off and identify when each strategy is preferable. 

Our proposed booster method can be extended to prepare excited states. Using existing near-term methods for estimating low-lying excited states \cite{higgott2019variational,parrish2019quantum,nakanishi2019subspace,lee2019generalized,kottmann2021feasible}, one can obtain an initial state that approximates an excited state. 
Then, with the estimated excited state energy, call it $\tilde{x}$, one can construct a shifted Gaussian booster $f(x) = e^{-a(x-\tilde{x})^2}$ such that the booster function peaks (approximately) at the excited state energy.
We note that many of the steps to implement the booster method for ground state preparation carry over for excited state preparation, including energy re-scaling. 

In many applications, we need to estimate the expectation value of an observable $O$ with respect to the ground state of a Hamiltonian $H$. Our method can be used to improve the accuracy of the estimate of this quantity. Alternatively, one can use the method in Ref.~\cite{zhang2021computing} to estimate this quantity without explicitly preparing the ground state. It would be interesting to know which approach is more efficient in practice.

Boosters provide a reliable depth to overlap conversion, which is lacking in near-term state preparation approaches.
In particular, Gaussian boosters suppress high energy eigenstates while maintaining a relatively low depth, compared to other booster functions. 
Based on our theoretical and empirical analysis, boosters have the potential to become a useful and efficient subroutine for various quantum algorithms, aiding in the quest for quantum advantage. 

\section*{Acknowledgement}
We thank Jakob Kottmann, Micha\l{} St\k{e}ch\l{}y, Athena Caesura, Alex Kunitsa, and Daniel Stilck-Fran\c{c}a for providing valuable feedback to the manuscript. We thank all our colleagues at Zapata Computing for their feedback and support.

\appendix
\section{List of symbols}
\begin{tabular}{cp{0.6\textwidth}}
  $f$ & Booster function \\
  $\lambda$ & An upper bound on ground state energy of $H$ \\
  $T$ & Truncation limit; depth proxy \\
  $N$ & Number of points in discretization in approximation of $f$ \\
  $\Delta$ & Spectral gap \\
  $a$ & Width parameter of the Gaussian booster function \\
  $\delta$ & Target booster implementation error \\
  $p_0$ & Target success probability \\
  $\beta$ & Parameter for modeling initial state weights using a beta distribution \\
  $\gamma$ & Initial overlap with the ground state, $|\langle \lambda_1|\psi \rangle|$ \\
  $\epsilon$ & Lower bound on the error between the ground state and the implemented boosted state
\end{tabular}\\

\section{Bounding the distance between $f_T$ and $f_{T,N}$}
\label{app:dist_ft_ftn}
It is clear from Eqs.~\eqref{eq:def_ft} and \eqref{eq:def_ftn} that $f_{T,N} \to f_T$ as $N \to \infty$. A concrete bound on the convergence speed can be obtained as follows. Assume that $\hat{f}$ is differentiable on $\R$. Let $\xi_j = (j+1/2)T/N$ for $j=-N, -N+1, \dots, N-1$. Then for arbitrary $\xi \in [jT/N, (j+1)T/N]$, the mean value theorem implies that 
\begin{align}
\hat{f}(\xi) = \hat{f}(\xi_j) + \hat{f}'(\eta) (\xi - \xi_j)
\end{align}
for some $\eta$ between $\xi$ and $\xi_j$. {Consequently, we have
\begin{align}
\abs{\hat{f}(\xi) - \hat{f}(\xi_j)} \le \abs{\xi - \xi_j}  \max\limits_{\xi \in [-T, T]} \abs{\hat{f}'(\xi)}, 
\end{align}
and hence for arbitrary $x \in [0, 1]$, }
\begin{align}
\abs{\hat{f}(\xi) e^{i 2\pi x \xi} - \hat{f}(\xi_j) e^{i 2\pi x \xi_j}} 
&\le \abs{\hat{f}(\xi) e^{i 2\pi x \xi} - \hat{f}(\xi_j) e^{i 2\pi x \xi}}  + \abs{\hat{f}(\xi_j) e^{i 2\pi x \xi} - \hat{f}(\xi_j) e^{i 2\pi x \xi_j}} \\
&=\abs{\hat{f}(\xi)  - \hat{f}(\xi_j)}  + \abs{\hat{f}(\xi_j)} \abs{ e^{i 2\pi x \xi} - e^{i 2\pi x \xi_j}} \\
&{\le \abs{\xi-\xi_j} \lrrb{\max_{\xi \in [-T, T]} \abs{\hat{f}'(\xi)} 
+ 2\pi \abs{x} \max_{\xi \in [-T, T]} \abs{\hat{f}(\xi)}}} \\
&\le \frac{T}{N} \lrrb{\max_{\xi \in [-T, T]} \abs{\hat{f}'(\xi)} 
+ 2\pi \max_{\xi \in [-T, T]} \abs{\hat{f}(\xi)}},
\end{align}
{where in the third step we use $\abs{e^{i 2\pi x \xi} -  e^{i 2\pi x \xi_j}} \le 2\pi \abs{x} \abs{\xi-\xi_j}$, and in the last step we use $|\xi - \xi_j| \le T/N$ and $|x|\le 1$}. It follows that
\begin{align}
\abs{f_T(x) - f_{T,N}(x)}
&=\abs{\displaystyle\int_{-T}^T \hat{f}(\xi) e^{i 2\pi x \xi} d\xi
- \frac{T}{N} \sum_{j=-N}^{N-1} \hat{f}\lrrb{\xi_j} e^{i 2\pi x \xi_j}} \\
&{=\abs{\sum_{j=-N}^{N-1}\displaystyle\int_{j T/N}^{(j+1) T/N} \hat{f}(\xi) e^{i 2\pi x \xi} d\xi
- \frac{T}{N} \sum_{j=-N}^{N-1} \hat{f}\lrrb{\xi_j} e^{i 2\pi x \xi_j}} } \\
&\le \sum_{j=-N}^{N-1}
\abs{\displaystyle\int_{j T/N}^{(j+1) T/N} \hat{f}(\xi) e^{i 2\pi x \xi} d\xi
- \frac{T}{N}  \hat{f}\lrrb{\xi_j} e^{i 2\pi x \xi_j}} \\
&{\le \sum_{j=-N}^{N-1}
\displaystyle\int_{j T/N}^{(j+1) T/N} \abs{\hat{f}(\xi) e^{i 2\pi x \xi} - \hat{f}\lrrb{\xi_j} e^{i 2\pi x \xi_j}} d\xi} \\
&{\le \sum_{j=-N}^{N-1}
\displaystyle\int_{j T/N}^{(j+1) T/N} \abs{\hat{f}(\xi) e^{i 2\pi x \xi} - \hat{f}\lrrb{\xi_j} e^{i 2\pi x \xi}} d\xi} \\
&\quad{+\sum_{j=-N}^{N-1}
\displaystyle\int_{j T/N}^{(j+1) T/N} \abs{\hat{f}(\xi_j) e^{i 2\pi x \xi} - \hat{f}\lrrb{\xi_j} e^{i 2\pi x \xi_j}} d\xi}\\
&{= \sum_{j=-N}^{N-1}
\displaystyle\int_{j T/N}^{(j+1) T/N} \abs{\hat{f}(\xi)  - \hat{f}\lrrb{\xi_j} } d\xi} \\
&\quad{+\sum_{j=-N}^{N-1}
\displaystyle\int_{j T/N}^{(j+1) T/N} \abs{\hat{f}(\xi_j)} \abs{e^{i 2\pi x \xi} -  e^{i 2\pi x \xi_j}} d\xi}\\
& \le \frac{2T^2}{N} \lrrb{\max_{\xi \in [-T, T]} \abs{\hat{f}'(\xi)} 
+ 2\pi \max_{\xi \in [-T, T]} \abs{\hat{f}(\xi)}},
\end{align}
{where the third, fourth and fifth steps follow from the triangle inequality, and the seventh step follows from 
\begin{align}
\abs{\hat{f}(\xi) - \hat{f}\lrrb{\xi_j}} \le \abs{\xi - \xi_j} \cdot \max_{\xi \in [-T, T]} \abs{\hat{f}'(\xi) }
\le \frac{T}{N} \cdot \max_{\xi \in [-T, T]} \abs{\hat{f}'(\xi) }    
\end{align}
and 
\begin{align}
\abs{e^{i 2\pi x \xi} -  e^{i 2\pi x \xi_j}} \le 2\pi \abs{x} \abs{\xi-\xi_j} \le \frac{2\pi T}{N}. 
\end{align}
}

{Now let $R \defeq \max_{\xi \in [-T, T]} (|\hat{f}(\xi)|+|\hat{f}'(\xi)|)$. Then we have}
\begin{align}
{\max\limits_{x \in [0,1]}\abs{f_T(x) - f_{T,N}(x)} = O\lrrb{\frac{RT^2}{N}}.}
\end{align}
So for arbitrary $\epsilon>0$, by choosing some $N=O(RT^2/\epsilon)$, we guarantee that $|f_T(x) - f_{T,N}(x)| \le \epsilon$ for all $x \in [0, 1]$. 

Recall that we require that $|\hat{f}|$ decays exponentially on $\R$, so that $\max_{x \in \R}|f(x) - f_{T}(x)| \le \epsilon$ for some $T=O(\log(1/\epsilon))$. In this case, $\hat{f}$ is bounded on $\R$, i.e., there exists a constant $R_1>0$ such that $|\hat{f}(\xi)| \le R_1$ for all $\xi \in \R$. If we require that $\hat{f}'$ is also bounded on $\R$, i.e., there exists a constant $R_2>0$ such that $\abs{\hat{f}'(\xi)} \le R_2$ for all $\xi \in \R$, then $R \le R_1 + R_2 =O(1)$. In this case, it is sufficient to choose some $N=O(T^2/\epsilon)=\tilde{O}(1/\epsilon)$ to ensure that $\max_{x \in [0, 1]}|f_T(x) - f_{T,N}(x)| \le \epsilon$, as claimed. Note that many exponentially-decaying functions (e.g. $\hat{f}(\xi)=e^{-a |\xi|^b}$ for some $a>0$ and $b \ge 1$) have bounded derivatives and hence satisfy this additional requirement.

\section{Choosing an appropriate $\delta$ in Problem \eqref{eq:opt_prob2}}
\label{app:choose_delta}
Here we describe how to set the parameter $\delta$ in Problem \eqref{eq:opt_prob2}. For better readability, we drop the subscript $\vec \theta$ throughout this section. We aim to choose $\delta$ properly so that $f_T$ is close to $f$, which implies that $O_{\le \lambda}(f_T)$ is close to $O_{\le \lambda}(f)$, {i.e.}, the total overlap between the real boosted state and the low-energy eigenstates of $H$ is close to the one for the ideal boosted state.

By Eq.~\eqref{eq:fourier_approx}, we know that the second constraint of Problem \eqref{eq:opt_prob2} ensures that 
\begin{align}
\norm{f(H) - f_T(H)} \le \delta,
\end{align}
which implies that
\begin{align}
\norm{f(H)\ket{\psi} - f_T(H) \ket{\psi}} \le \delta.
\label{eq:dist_fh_fth}
\end{align}
Note that $\norm{f(H) \ket{\psi}} \le 1$ since $|f(x)| \le 1$, for all $x \in [0,1]$. So we have $\norm{f_T(H) \ket{\psi}} \le 1+\delta$.    
{
\begin{lemma}
Suppose $\ket{\psi_1}$ and $\ket{\psi_2}$ are two vectors such that $\norm{\ket{\psi_1}}, \norm{\ket{\psi_2}} \le a$ and $\norm{\ket{\psi_1}-\ket{\psi_2}} \le b$ for some $a, b > 0$, and $A$ is a linear operator such that $\norm{A} \le 1$. Then
\begin{align}
    \abs{\bra{\psi_1} A \ket{\psi_1} - \bra{\psi_2} A \ket{\psi_2}} \le 2 a b.
\end{align}
\label{lem:exp_val_dist}
\end{lemma}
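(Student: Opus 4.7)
The plan is to add and subtract a mixed ``cross'' term so that the two-sided difference $\bra{\psi_1}A\ket{\psi_1}-\bra{\psi_2}A\ket{\psi_2}$ telescopes into a sum of two one-sided differences, each of which is linear in $\ket{\psi_1}-\ket{\psi_2}$. Concretely, I would write
\begin{align}
\bra{\psi_1}A\ket{\psi_1}-\bra{\psi_2}A\ket{\psi_2}
= \bra{\psi_1}A\bigl(\ket{\psi_1}-\ket{\psi_2}\bigr) + \bigl(\bra{\psi_1}-\bra{\psi_2}\bigr) A\ket{\psi_2},
\end{align}
and then apply the triangle inequality to bound the absolute value of the left-hand side by the sum of the absolute values of the two right-hand terms.

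Next, for each of the two one-sided terms I would apply Cauchy--Schwarz together with submultiplicativity of the operator norm. For the first term this gives
\begin{align}
\bigl|\bra{\psi_1}A(\ket{\psi_1}-\ket{\psi_2})\bigr| \le \norm{\ket{\psi_1}} \cdot \norm{A} \cdot \norm{\ket{\psi_1}-\ket{\psi_2}} \le a \cdot 1 \cdot b = ab,
\end{align}
and for the second term, observing that $\norm{\bra{\psi_1}-\bra{\psi_2}} = \norm{\ket{\psi_1}-\ket{\psi_2}}$, we similarly get a bound of $ab$. Summing yields $2ab$, as required.

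There is no real obstacle here: the only thing to be careful about is which vector gets paired with which bound (so that each product uses one factor of $a$ and one factor of $b$, not two factors of $a$ or two factors of $b$). The telescoping decomposition above makes this bookkeeping automatic, since each of the two terms has exactly one ``difference'' factor (bounded by $b$) and one ``single-vector'' factor (bounded by $a$). The operator-norm hypothesis $\norm{A}\le 1$ is used only to absorb $A$ inside the Cauchy--Schwarz estimate; no further properties of $A$ (Hermiticity, positivity, etc.) are needed.
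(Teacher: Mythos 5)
Your proof is correct and follows essentially the same route as the paper: both insert the cross term $\bra{\psi_1}A\ket{\psi_2}$ to split the difference into $\bra{\psi_1}A(\ket{\psi_1}-\ket{\psi_2})$ and $(\bra{\psi_1}-\bra{\psi_2})A\ket{\psi_2}$, then bound each by $ab$ via the triangle inequality and Cauchy--Schwarz with $\norm{A}\le 1$. No gaps; nothing further is needed.
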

\begin{proof}
Using the triangle inequality, we obtain
\begin{align}
    \abs{\bra{\psi_1} A \ket{\psi_1} - \bra{\psi_2} A \ket{\psi_2}}
    &\le 
\abs{\bra{\psi_1} A \ket{\psi_1} - \bra{\psi_1} A \ket{\psi_2}}
+
\abs{\bra{\psi_1} A \ket{\psi_2} - \bra{\psi_2} A \ket{\psi_2}} \\
& =
    \abs{\bra{\psi_1} A \lrrb{\ket{\psi_1} - \ket{\psi_2}}}
+
\abs{\lrrb{\bra{\psi_1} - \bra{\psi_2}} A \ket{\psi_2}} \\
& \le  a b +  b a \\
& = 2ab.
\end{align}
\end{proof}
}
{It follows from Lemma \ref{lem:exp_val_dist}, Eq.~\eqref{eq:dist_fh_fth},
$f(H)\ket{\psi} \le 1$ and $\norm{f_T(H) \ket{\psi}} \le 1+\delta$ that}
\begin{align}
\abs{\bra{\psi} f^\dagger(H) P_{\le \lambda} f(H) \ket{\psi}
- \bra{\psi} f_T^\dagger(H) P_{\le \lambda} f_T(H) \ket{\psi}} \le 2 \delta (1+\delta),
\label{eq:dist_overlap_plambda}
\end{align}
and
\begin{align}
\abs{\bra{\psi} f^\dagger(H) f(H) \ket{\psi}
- \bra{\psi} f_T^\dagger(H) f_T(H) \ket{\psi}} \le 2 \delta (1+\delta).
\label{eq:dist_overlap}
\end{align}
Meanwhile, the first constraint of Problem \eqref{eq:opt_prob2} ensures that
\begin{align}
{\bra{\psi} f^\dagger(H)  f(H) \ket{\psi} 
= p_{succ}(f) \lrrb{\displaystyle\int_{-\infty}^{\infty} |\hat{f}(\xi)| d\xi}^2
\ge p_0 \lrrb{\displaystyle\int_{-\infty}^{\infty} |\hat{f}(\xi)| d\xi}^2}.
\label{eq:lower_bound_sqared_norm}
\end{align}
Let $\eta$ be a lower bound on $\int_{-\infty}^{\infty} |\hat{f}(\xi)| d\xi$. It can be obtained by, e.g., noting that 
\begin{align}
\int_{-\infty}^{\infty} \abs{\hat{f}(\xi)} d\xi \ge \abs{\int_{-\infty}^{\infty} \hat{f}(\xi) d\xi}=\abs{f(0)}=1,    
\end{align}
where we use the last constraint of Problem \eqref{eq:opt_prob2} in the last step. But one might obtain better bounds by utilizing other information about $f$. Then combining Eqs.~\eqref{eq:dist_overlap_plambda}, \eqref{eq:dist_overlap} and \eqref{eq:lower_bound_sqared_norm} yields
\begin{align}
\abs{O_{\le \lambda}(f) - O_{\le \lambda}(f_T)}
&=\abs{
\dfrac{ \bra{\psi} f^\dagger(H) P_{\le \lambda} f(H) \ket{\psi}}{\bra{\psi} f^\dagger(H) f(H) \ket{\psi}}
-
\dfrac{\bra{\psi} f_T^\dagger(H) P_{\le \lambda} f_T(H) \ket{\psi}}{\bra{\psi} f_T^\dagger(H) f_T(H) \ket{\psi}}
} \\
&\le
\abs{
\dfrac{ \bra{\psi} f^\dagger(H) P_{\le \lambda} f(H) \ket{\psi}}{\bra{\psi} f^\dagger(H) f(H) \ket{\psi}}
-
\dfrac{\bra{\psi} f_T^\dagger(H) P_{\le \lambda} f_T(H) \ket{\psi}}{\bra{\psi} f^\dagger(H) f(H) \ket{\psi}}
} \\
&\quad +
\abs{
\dfrac{ \bra{\psi} f_T^\dagger(H) P_{\le \lambda} f_T(H) \ket{\psi}}{\bra{\psi} f^\dagger(H) f(H) \ket{\psi}}
-
\dfrac{\bra{\psi} f_T^\dagger(H) P_{\le \lambda} f_T(H) \ket{\psi}}{\bra{\psi} f_T^\dagger(H) f_T(H) \ket{\psi}}
} \\
&{= 
\dfrac{\abs{ \bra{\psi} f^\dagger(H) P_{\le \lambda} f(H) \ket{\psi} - \bra{\psi} f_T^\dagger(H) P_{\le \lambda} f_T(H) \ket{\psi} }}{\bra{\psi} f^\dagger(H) f(H) \ket{\psi}}} \\
&\quad 
{+\dfrac{\bra{\psi} f_T^\dagger(H) P_{\le \lambda} f_T(H) \ket{\psi}}{\bra{\psi} f_T^\dagger(H) f_T(H) \ket{\psi}} \cdot
\dfrac{\abs{\bra{\psi} f^\dagger(H) f(H) \ket{\psi} - \bra{\psi} f_T^\dagger(H) f_T(H) \ket{\psi}}}{\bra{\psi} f^\dagger(H) f(H) \ket{\psi}}} \\
&{\le 
\dfrac{\abs{ \bra{\psi} f^\dagger(H) P_{\le \lambda} f(H) \ket{\psi} - \bra{\psi} f_T^\dagger(H) P_{\le \lambda} f_T(H) \ket{\psi} }}{\bra{\psi} f^\dagger(H) f(H) \ket{\psi}}} \\
&\quad 
{+\dfrac{\abs{\bra{\psi} f^\dagger(H) f(H) \ket{\psi} - \bra{\psi} f_T^\dagger(H) f_T(H) \ket{\psi}}}{\bra{\psi} f^\dagger(H) f(H) \ket{\psi}}} \\
&{\le 
\dfrac{2\delta(1+\delta)}{p_0\eta^2}
+\dfrac{2\delta(1+\delta)}{p_0\eta^2}} \\
&{=\dfrac{4\delta(1+\delta)}{p_0\eta^2},}
\end{align}
{where the second step follows from the triangle inequality, the fourth step follows from $\bra{\psi} f_T^\dagger(H) P_{\le \lambda} f_T(H) \ket{\psi} \le \bra{\psi} f_T^\dagger(H) f_T(H) \ket{\psi}$, and the fifth step follows from Eqs.~\eqref{eq:dist_overlap_plambda}, \eqref{eq:dist_overlap} and \eqref{eq:lower_bound_sqared_norm} and the definition of $\eta$.} Consequently, setting {$\delta= (\sqrt{\epsilon p_0 \eta^2 + 1} - 1)/2$} ensures that $\abs{O_{\le \lambda}(f) - O_{\le \lambda}(f_T)} \le \epsilon$. This provides a method for picking an appropriate $\delta$ in Problem \eqref{eq:opt_prob2}.

\section{Detailed analysis of Gaussian boosters}
\label{app:analysis_gaussian_booster}
Here we give detailed analysis of Gaussian boosters. Recall that in Section \ref{subsec:gaussian_booster_example}, we demonstrate how to find the optimal parameter $a$ for the booster function $f_a(x)=e^{-ax^2}$ in the case where $S=[0, 1]$ and $q(x)=\beta e^{-\beta x} / (1 - e^{-\beta})$ for some $\beta>0$.  

By Eqs.~\eqref{eq:fourier_approx} and \eqref{eq:gaussian_error}, we have
\begin{align}
\max_{x \in \R}|f_a(x) - f_{T;a}(x)| 
\le 1 - \operatorname{erf}(\pi T / \sqrt{a})
\le \dfrac{\sqrt{a} e^{-\pi^2 T^2 / a} }{\pi^{3/2} T},   
\label{eq:gaussian_dist_fa_fta}
\end{align}
where in the second step we use the fact that
\begin{align}
1-\operatorname{erf}(x)
=\dfrac{2}{\sqrt{\pi}}
\displaystyle\int_{x}^{\infty} e^{-t^2} dt \le  \dfrac{1}{\sqrt{\pi} x}   \displaystyle\int_{x}^{\infty} 2t e^{-t^2} dt
=\dfrac{e^{-x^2}}{\sqrt{\pi}x}.
\end{align}
This implies that for some $a = \tilde{\Theta}(T^2/\log(1/\delta))$, we have
\begin{align}
\norm{f_a(H)-f_{T; a}(H)} \le \max_{x \in \R}|f_a(x) - f_{T;a}(x)| \le \delta.
\label{eq:gaussian_dist_f_ft}
\end{align}
In this case, we use the results in Appendix \ref{app:dist_ft_ftn} to estimate the smallest $N$ such that
\begin{align}
\norm{f_{T; a}(H)-f_{T,N; a}(H)} \le \max_{x \in [0, 1]}|f_{T;a}(x) - f_{T, N;a}(x)| \le \delta.
\label{eq:gaussian_dist_ft_ftn}
\end{align}
Recall that $\hat{f}_a(\xi)=\sqrt{\frac{\pi}{a}}e^{-\frac{(\pi \xi)^2}{a}}$ is the Fourier transform of $f_a(x)=e^{-ax^2}$. Then we have
\begin{align}
R &\defeq \max_{\xi \in [-T, T]} \lrrb{\abs{\hat{f}_a(\xi)} + 
\abs{\hat{f}_a'(\xi)}}  \\
&= 
\max_{\xi \in [-T, T]} \lrrb{ \sqrt{\frac{\pi}{a}}e^{-\frac{(\pi \xi)^2}{a}}+
\dfrac{2\pi^{3/2}}{a^{3/2}} e^{-\frac{(\pi \xi)^2}{a}} \abs{\xi}}
\\
&{\le 
\max_{\xi \in [-T, T]} \lrrb{ \sqrt{\frac{\pi}{a}}e^{-\frac{(\pi \xi)^2}{a}}}
+
\max_{\xi \in [-T, T]} 
\dfrac{2\pi^{3/2}}{a^{3/2}} e^{-\frac{(\pi \xi)^2}{a}} \abs{\xi}
} \\
&{=\sqrt{\frac{\pi}{a}}+\frac{\sqrt{2\pi}}{\sqrt{e} a}} \\
&{=O\lrrb{\frac{1}{\sqrt{a}}}} \\
&{=\tilde{O}\lrrb{\frac{\sqrt{\log(1/\delta)}}{T}}}.
\end{align}
It follows that Eq.~\eqref{eq:gaussian_dist_ft_ftn} holds for some $N=O(RT^2/\delta)=\tilde{O}(T/\delta)$. 

\subsection{Asymptotic performance of Gaussian boosters}
\label{app:asymptotic_performance}
Now suppose we want the boosted state $\frac{f_{T;a}(H) \ket{\psi}} {\norm{f_{T;a}(H) \ket{\psi}}}$ to be $\epsilon$-close to the ground state of $H$, for arbitrary $\epsilon>0$. How large does $T$ need to be as a function of $\epsilon$? We will show that $T=\tilde{O}(\log(1/\epsilon))$ is sufficient for this purpose. 

Formally, suppose $H=\sum_{j=1}^D \lambda_j \ket{\lambda_j} \bra{\lambda_j}$ has eigenvalues $\lambda_j$'s and eigenstates $\ket{\lambda_j}$'s, where $0 =\lambda_1 < \lambda_2 = \Delta \le \dots \le \lambda_D \le 1$. Moreover, suppose $\ket{\psi}=\sum_{j=1}^D \mu_j \ket{\lambda_j}$ is the state produced by an ansatz circuit such that $ \gamma \defeq \mu_1 > 0$. Then applying $f_a(H)$ or $f_{T;a}(H)$ on $\ket{\psi}$ yields the unnormalized state
$f_a(H) \ket{\psi} = \sum_{j=1}^D f_a(x) \mu_j \ket{\lambda_j}$ or $f_{T;a}(H) \ket{\psi} = \sum_{j=1}^D f_{T;a}(x) \mu_j \ket{\lambda_j}$, respectively. We want to choose appropriate $T$ and $a$ such that
\begin{align}
    \norm{\ket{\lambda_1} - \frac{f_{T;a}(H) \ket{\psi}}{\norm{f_{T;a}(H) \ket{\psi}}}} \le \epsilon
    \label{eq:dist_gs_real_boosted_state}
\end{align}
for given $\epsilon>0$. {By the triangle inequality, we know that Eq.~\eqref{eq:dist_gs_real_boosted_state} holds if we have both}
\begin{align}
    \norm{\ket{\lambda_1} - \frac{f_a(H) \ket{\psi}}{\norm{f_a(H) \ket{\psi}}}} \le \epsilon/2
    \label{eq:dist_gs_ideal_boosted_state}
\end{align}
and
\begin{align}
    \norm{\frac{f_a(H) \ket{\psi}}{\norm{f_a(H) \ket{\psi}}} - \frac{f_{T;a}(H) \ket{\psi}}{\norm{f_{T;a}(H) \ket{\psi}}}} \le \epsilon/2.  
    \label{eq:dist_ideal_real_boosted_states}
\end{align}
It remains to find $T$ and $a$ that satisfy these equations. First, {note that 
\begin{align}
\norm{\gamma \ket{\lambda_1} - f_a(H)\ket{\psi}} = 
\norm{\sum_{j=2}^D \mu_j e^{-a\lambda_j^2} \ket{\lambda_j}}
=
\sqrt{\sum_{j=2}^D |\mu_j|^2 e^{-2a\lambda_j^2}} \le e^{-a\Delta^2}, 
\label{eq:gs_fa_h_psi_dist}
\end{align}
where the last step follows from $\lambda_j \ge \Delta$ for $j=2,3,\dots,D$ and $\sum_{j=2}^D |\mu_j|^2 \le 1$. This implies that
\begin{align}
    \abs{\gamma - \norm{f_a(H)\ket{\psi}}} = 
    \abs{\norm{\gamma \ket{\lambda_1}} - \norm{f_a(H)\ket{\psi}}} \le  e^{-a\Delta^2}.
    \label{eq:gs_fa_h_psi_dist2}
\end{align}
}
As a result, we have
\begin{align}
\norm{\ket{\lambda_1} - \frac{f_{a}(H) \ket{\psi}}{\norm{f_{a}(H) \ket{\psi}}}}
&\le  
\norm{\ket{\lambda_1} - \frac{f_{a}(H) \ket{\psi}}{\gamma}}\\
&\quad +
\norm{\frac{f_{a}(H) \ket{\psi}}{\gamma} - 
\frac{f_{a}(H) \ket{\psi}}{\norm{f_{a}(H) \ket{\psi}}}} \\
&{= \frac{\norm{\gamma \ket{\lambda_1} - f_{a}(H) \ket{\psi}}}{\gamma}} \\
&\quad {+ \frac{\abs{\gamma - \norm{f_a(H)\ket{\psi}}}}{\gamma}}\\
&\le \dfrac{e^{-a\Delta^2}}{\gamma}+\dfrac{e^{-a\Delta^2}}{\gamma} \\
& = \dfrac{2e^{-a\Delta^2}}{\gamma},
\end{align}
{where the first step follows from the triangle inequality, and the third step follows from Eqs.~\eqref{eq:gs_fa_h_psi_dist} and \eqref{eq:gs_fa_h_psi_dist2}.}
Thus, Eq.~\eqref{eq:dist_gs_ideal_boosted_state} holds for some $a=O(\log(1/(\gamma \epsilon))/\Delta^2)$. Meanwhile, Eq.~\eqref{eq:gaussian_dist_fa_fta} implies that
\begin{align}
\norm{f_a(H) \ket{\psi} - f_{T;a}(H) \ket{\psi}} \le 1 - \operatorname{erf}(\pi T / \sqrt{a})
\le \dfrac{\sqrt{a} e^{-\pi^2 T^2 / a} }{\pi^{3/2} T},  
\label{eq:fa_h_psi_fta_h_psi_dist}
\end{align}
{
which in turn implies that
\begin{align}
\abs{\norm{f_a(H) \ket{\psi}} - \norm{f_{T;a}(H) \ket{\psi}}} \le 1 - \operatorname{erf}(\pi T / \sqrt{a})
\le \dfrac{\sqrt{a} e^{-\pi^2 T^2 / a} }{\pi^{3/2} T}.
\label{eq:fa_h_psi_fta_h_psi_dist2}
\end{align}
}
{Then we get}
\begin{align}
    \norm{\frac{f_a(H) \ket{\psi}}{\norm{f_a(H) \ket{\psi}}} - \frac{f_{T;a}(H) \ket{\psi}}{\norm{f_{T;a}(H) \ket{\psi}}}} & \le 
        \norm{\frac{f_a(H) \ket{\psi}}{\norm{f_a(H) \ket{\psi}}} - \frac{f_{T;a}(H) \ket{\psi}}{\norm{f_{a}(H) \ket{\psi}}}} \\
        &\quad +
            \norm{\frac{f_{T;a}(H) \ket{\psi}}{\norm{f_a(H) \ket{\psi}}} - \frac{f_{T;a}(H) \ket{\psi}}{\norm{f_{T;a}(H) \ket{\psi}}}} \\
& {= \frac{\norm{f_a(H) \ket{\psi} - f_{T;a}(H) \ket{\psi}}}{\norm{f_a(H) \ket{\psi}}}} \\
& {\quad + \frac{\abs{\norm{f_a(H) \ket{\psi}} - \norm{f_{T;a}(H) \ket{\psi}}}}{\norm{f_a(H) \ket{\psi}}}}\\
& \le  \dfrac{1 - \operatorname{erf}(\pi T / \sqrt{a})}{\gamma}
+
\dfrac{1 - \operatorname{erf}(\pi T / \sqrt{a})}{\gamma} \\
& = \dfrac{2(1 - \operatorname{erf}(\pi T / \sqrt{a}))}{\gamma} \\
& \le \dfrac{2 \sqrt{a} e^{-\pi^2 T^2 / a} }{\pi^{3/2} T \gamma}, 
\end{align}
{where the first step follows from the triangle inequality, and the third step follows from Eqs.~\eqref{eq:fa_h_psi_fta_h_psi_dist} and \eqref{eq:fa_h_psi_fta_h_psi_dist2}
and $\norm{f_a(H) \ket{\psi}} =\sqrt{\gamma^2 + \sum_{j=2}^D |\mu_j|^2 e^{-2a\lambda_j^2}} \ge \gamma$.} So Eq.~\eqref{eq:dist_ideal_real_boosted_states} holds for some $T=\tilde{O}(\sqrt{a \log(1/(\gamma\epsilon))})=\tilde{O}(\log(1/(\gamma \epsilon))/\Delta)$. 
Overall, by choosing some $T=\tilde{O}(\log(1/(\gamma \epsilon))/\Delta)$ and $a=O(\log(1/(\gamma \epsilon))/\Delta^2)$, we ensure that Eq.~\eqref{eq:dist_gs_real_boosted_state} holds, as claimed.

\end{document}